\DeclareMathOperator*{\argmax}{arg\,max}
\theoremstyle{plain}
\newtheorem{theorem}{Theorem}[section]
\newtheorem{lemma}[theorem]{Lemma}
\theoremstyle{definition}
\theoremstyle{remark}
\journal{Journal of Air Transport Management}
\begin{document}

\begin{frontmatter}



\title{Runway capacity expansion planning for public airports under demand uncertainty} 


\author[label1]{Ziyue Li} 
\author[label2]{Joseph Y.J. Chow}
\author[label1]{Qianwen Guo}

\affiliation[label1]{organization={Department of Civil and Environmental Engineering, FAMU-FSU College of Engineering,  Florida State University},
            addressline={2525 Pottsdamer Street}, 
          city={Tallahassee},
           postcode={32310}, 
           state={FL},
            country={USA}}
\affiliation[label2]{organization={C2SMARTER University Transportation Center, Department of Civil and Urban Engineering, New York University Tandon School of Engineering},
           addressline={6 Metrotech Center}, 
            city={Brooklyn},
           postcode={11021}, 
            state={NY},
           country={USA}}

\begin{abstract}
Flight delay is a significant issue affecting air travel. The runway system, frequently falling short of demand, serves as a bottleneck. As demand increases, runway capacity expansion becomes imperative to mitigate congestion. However, the decision to expand runway capacity is challenging due to inherent uncertainties in demand forecasts. This paper presents a novel approach to modeling air traffic demand growth as a jump diffusion process, incorporating two layers of uncertainty: Geometric Brownian Motion (GBM) for continuous variability and a Poisson process to capture the impact of crisis events, such as natural disasters or public health emergencies, on decision-making. We propose a real options model to jointly evaluate the interrelated factors of optimal runway capacity and investment timing under uncertainty, with investment timing linked to trigger demand.  The findings suggest that increased uncertainty indicates more conservative
decision-making. Furthermore, the relationship between optimal investment timing and expansion size is complex: if the expansion size remains unchanged, the trigger demand decreases as the demand growth rate increases; if the expansion size experiences a jump, the trigger demand also exhibits a sharp rise. This work provides valuable insights for airport authorities for informed capacity expansion decision-making.
\end{abstract}



\begin{keyword}



Airport runway \sep Capacity expansion \sep Investment timing \sep Expansion size \sep Real options

\end{keyword}

\end{frontmatter}



\section{Introduction}

Air traffic demand has grown over the past few decades, resulting in a demand-capacity imbalance that frequently causes flight delays or even cancellations (\cite{sheng2019modeling, li2022modeling}). In particular, flight delay is a significant issue in U.S. air travel, with an estimated total economic cost of 23.7 billion dollars in 2016, rising to 33 billion dollars in 2019 (\cite{lukacs2020cost}). The issue has prompted extensive research in airport demand and capacity management (see, for example, \cite{madas2008airport, farhadi2014runway, gillen2016airport, di2020critical, polater2020airports, de2020airport, scala2021optimization}). \cite{jacquillat2018roadmap} summarize the approaches in airport demand and capacity management, classifying them into three main categories: infrastructure expansion, operational enhancements, and demand management.

The challenge of accommodating demand growth arises across various areas within an airport, including delays for passengers at security checkpoints (\cite{bullock2010automated}), limitations in the baggage handling system (\cite{kim2017balancing}), and queuing of aircraft on runways (\cite{rodriguez2019assessment}). The runway system, however, is the most critical bottleneck, as its capacity frequently falls short of demand. As the weakest link in the process chain of a flight, runway capacity limit becomes especially important (\cite{itr2007runway}).

Although the runway throughput rate may vary depending on several operational factors, such as airport physical layout (\cite{sumathi2018airport}), weather conditions (\cite{rodriguez2022impact}), and arrival and departure strategies (\cite{gilbo2013enhanced}), it is clear that the number of runways plays a vital role in determining the maximum potential capacity of the airport system. All else being equal, the more runways available, the more movements can be operated simultaneously, thereby enhancing overall capacity (\cite{jacquillat2018roadmap}).

Initially, some airports operated with a single runway for various reasons, including limited funding, small initial demand, and restricted space. As air traffic demand grows over time, these airports may invest in additional runways to accommodate more flights and passengers. Singapore Changi Airport (SIN) is an example of this evolution, having expanded from a single runway to meet increasing demand. SIN airport opened in 1981 with one runway, and a new parallel runway was completed in 2004 (\cite{SIN2024runway}). The expansion enabled SIN airport to handle more flights and larger aircraft, making it one of the busiest airports in the world today.

In the literature, there are primarily two approaches to addressing congestion within airport systems. The first approach is demand management, which improves mechanisms for airport resource pricing and allocation to achieve higher efficiency and welfare within the existing capacity limitations (\cite{zhang2010airport, xiao2013demand, sheng2015slot, wan2015airport}). In this context, airports, airlines, and passengers form a market. Airports charge congestion tolls to airlines as a means of demand management and a source of investment financing. Airlines then set higher fares, which pass the cost of congestion onto passengers. Higher fares also curtail passenger demand, thereby reducing congestion. The second approach is infrastructure expansion, which involves investing in airport capacity to achieve a better match between supply and demand. Two methodologies are commonly applied within this approach. The first utilizes stochastic programming (\cite{solak2009airport, sun2015stochastic, sun2016capacity}) where stochastic programming models are converted into mixed integer nonlinear programs, for which efficient algorithms are proposed. The second methodology employs real options models (\cite{xiao2017modeling, zheng2020airline, balliauw2020expanding}). Drawing from the financial concept of call options, real options provide the airport authority with the right to invest in additional capacity when demand reaches a specified level within a particular time frame, granting the flexibility to take this action without the obligation to do so. This methodology is also widely used in other transportation investment planning areas, including urban transit (\cite{GUO2018288, li2015transit}), railway systems (\cite{couto2015high, guo2018time, sang2019design, guo2023investment}) and highway networks (\cite{chow2011network, chow2011real, chow2011network2}).

Regardless of the methodology applied, a delay function is essential for quantifying the costs of congestion. However, most studies assume that demand cannot exceed capacity (\cite{zhang2006airport, sun2015stochastic, balliauw2020port, balliauw2020expanding}). \cite{zhang2006airport} used a specific delay function based on steady-state queuing theory, where the delay cost goes to infinity when demand approaches capacity. \cite{balliauw2020expanding} applied a different delay function but also asserted that ``output is bound by the designed capacity of the airport system", setting output equal to capacity when demand is larger than capacity to prevent the delay cost from growing unrealistically large. Nonetheless, when demand exceeds capacity, the airport incurs additional, but not infinite, costs associated with schedule coordination or slot control (\cite{jacquillat2018roadmap}). In practice, demand can and does exceed capacity; for instance, San Diego International Airport (SAN) has reached its full capacity of takeoffs and landings, years earlier than originally forecast (\cite{becker2019SAN}). This underscores the need for a well-formulated delay function, particularly for scenarios where demand approaches or exceeds capacity.

Moreover, uncertainties inherent in air demand growth affect the decision-making process (\cite{sun2015stochastic}). Nevertheless, varying levels of uncertainty often receive insufficient attention. Beyond random fluctuations, the effects of crisis events, such as natural disasters or public incidents like COVID-19, should also be accounted for, as they can lead to abrupt declines in demand growth (\cite{tirtha2022airport, guo2023investment}), leading to more conservative decision-making. An analytical model quantifying these uncertainties is necessary to assist airport authorities in investment decisions regarding capacity expansion, given that such investment is typically costly and irreversible. If airport authorities invest in runway capacity too early or with excessive expansion, they may incur unnecessary costs that could have been avoided by postponing the investment or through a smaller expansion size. Conversely, investing too late or with inadequate expansion may lead to congestion and increased costs due to rising demand for airport services. Therefore, two interconnected decisions, when to invest and how to expand, should be considered simultaneously.

Considering all of the above, this paper proposes a real options model with a jump process to address the airport runway capacity expansion problem. Specifically, we determine the optimal investment timing and expansion size for capacity expansion, aiming to minimize the costs associated with the airport runway system. The contributions of this work include the following:

\begin{enumerate}
    \item Explicit solutions for the timing and size decisions regarding capacity expansion are obtained. Compared with \cite{balliauw2020expanding}, this study provides analytical solutions for both deterministic and stochastic models.
    \item Disruptions affecting air traffic demand are taken into account, with their occurrence probability and impact magnitude on demand explicitly modeled. A jump diffusion model is introduced to capture such random events. 
    \item A modified delay function is proposed to address scenarios where airport demand approaches or even exceeds capacity, avoiding the delay function from becoming unrealistically large.
    \item Sensitivity analyses are conducted to examine how investment timing and expansion size vary with the airport's initial runway capacity, demand growth rate, and levels of uncertainty. The results provide more detailed insights compared to those presented in \cite{balliauw2020expanding}.
\end{enumerate}

The paper is structured as follows. Section \ref{sec: Problem Description and Model Formulation} outlines the problem and the model formulation, including the model setup, cost function, deterministic dynamic model, and stochastic dynamic model. Section \ref{sec: Model Applications} presents the results from numerical experiments, comparing different airports with varying characteristics. Finally, conclusions are drawn in Section \ref{sec: Conclusions and Future Studies}.

\section{Problem description and model formulation}
\label{sec: Problem Description and Model Formulation}
\subsection{Model setup}

Throughout this paper, we examine the cost structure of the public airport runway system with the goal of minimizing system costs by evaluating the decision to invest in runway capacity expansion. Additionally, we determine the optimal investment timing (when to invest) and expansion size (the increment of capacity expansion) for such investment. A description of the considered cost components is provided in Section \ref{sec:cost_function}. In Section \ref{sec:deterministic_dynamic_model}, a deterministic dynamic model is derived, while a stochastic dynamic model is proposed in Section \ref{sec:stochastic_dynamic_model}.

The notations and units for the parameters used in the following analyses are summarized in Table \ref{tab:variable_definition}. To distinguish between the two models for convenience, let $Q_t$ denote the demand at time $t$ in the deterministic model, and $Q(t)$ represent the demand in the stochastic model, where $Q(t)$ follows a stochastic process. The symbol $q$ refers to a fixed demand level without randomness. In this paper, demand is assumed to be an exogenous process. Meanwhile, $T$ denotes the investment timing in the deterministic model, while $\tau$ represents the investment timing in the stochastic model, where $\tau$ is a random variable.

\begin{table}[ht]
\caption{Variable Definition.}
\centering 
~\\
\resizebox{0.98\textwidth}{!}{%
$
{\begin{tabular}{llll}
\toprule
Notation & Definition & \makecell[l]{Baseline \\ Value} & Unit  \\\midrule
$Q_t, Q(t), q$ & Air traffic demand & -- & operation/hour\\     
$c$ & Unit operating cost & 1,000 & \$/operation\\
$c_h$ & Unit capacity holding (inventory) cost & 500 & \$/operation\\ 
$A$ & Monetary scaling factor of delay cost & 50,000 & \$/operation\\  
$\alpha$ & Specific constant of the delay function & 2 & --\\  
$\beta$ & Specific constant of the delay function & 3 & --\\  
$\mu$ & Specific constant of the delay function & 1 & --\\  
$\nu$ & Capacity utilization & -- & --\\  
$f$ & Fixed capital cost of capacity expansion & 8e6 & \$\\  
$v$ & Variable capital cost of capacity expansion & 20 & (\$$\times$hour)/operation\\  
$\eta$ & Air traffic demand growth rate & 0.02 & --\\  
$N_p$ & Operation hours per year & 4,000 & hour/year\\ 
$\rho$ & Annual discount rate & 0.07 & --\\ 
$T, \tau$ & Investment timing & -- & year\\ 
$K_0$ & Initial runway capacity & 40 & operation/hour\\ 
$\Delta K$ & Expansion size (increment of runway capacity expansion) & -- & operation/hour\\ 
$\sigma$ & Volatility rate (for the stochastic model) &  0.05 & --\\ 
$\lambda$ & Arrival rate of the Poisson process (for the stochastic model) & 0.05 & --\\ 
\bottomrule
\end{tabular}}
$%
}
\label{tab:variable_definition}
\end{table}

\subsection{Cost function}
\label{sec:cost_function}
We consider the generalized costs associated with the airport runway system, encompassing costs incurred by both the airport and the aircraft utilizing the runway system. Aircraft are charged for runway usage, generating revenue for the airport. However, this revenue is treated as an internal system variable and is excluded from the generalized costs analyzed in this study, as the focus is on public airports (see, e.g., \cite{zhang2010airport}). This approach differs from that of private airports, whose objective is profit maximization, as discussed in \cite{balliauw2020expanding}. While the specific cost components and their values associated with an airport's runway system can vary by airport, they generally include operating costs, capacity-related (inventory) holding costs, and delay-related costs. Additionally, there is a one-time investment cost for the capacity expansion project. 

First, we calculate the hourly system costs, which are comprised of operating costs, capacity holding costs, and delay costs. The hourly system cost function is expressed as follows:
\begin{equation}
\label{eq:system_cost_def}
    C(q) = cq + c_h K + D(q, K).
\end{equation}
The first term on the right-hand side of Eq. (\ref{eq:system_cost_def}) represents operating costs --- expenses incurred to maintain system operations and provide services to airlines, passengers, and other users. Here, $c$ denotes the unit operating cost, and $q$ represents the air traffic demand. The second term accounts for capacity (inventory) holding costs, which are expenses associated with maintaining capacity, including costs for energy and security. In this context, $c_h$ denotes the unit capacity holding cost, while $K$ represents the runway capacity. The third term represents delay costs, also known as congestion costs, which encompasses expenses related to flight delays, such as passenger compensation, aircraft re-routing, additional staffing, overtime pay, and lost revenue from cancellations.

According to \cite{sun2015stochastic}, delay costs are demands multiplied by the delay function:
\begin{equation}
\label{eq:delay_costs_def}
    D(q,K) = q F(\nu),
\end{equation}
where $\nu = q/K$ is the capacity utilization, and $F$ is the delay function. The delay function can take various mathematical forms depending on the practical context. \cite{sun2015stochastic} provides the following examples:
\begin{itemize}
    \item In highway networks, the following exponential form of the delay function is used:
    \begin{equation}
        F(\nu) = A(1+\alpha\nu^{\beta}),
    \end{equation}
    where $A$ is delay parameter and $\alpha, \beta$ are calibrated coefficients.
    \item In queueing systems, the delay function may take the following form from M/M/1 queues:
    \begin{equation}
        F(\nu) = A\left(\frac{\nu^2}{1-\nu}\right).
    \end{equation}
\end{itemize}

It is commonly assumed that demand cannot exceed capacity. \cite{balliauw2020expanding} states that the output is bounded by the designed capacity $K$, leading to the result $\nu = 1$ when $q \geq K$. \cite{sun2015stochastic} notes that when demand approaches capacity, the delay costs rise very sharply. Given the objective of minimizing total costs, demand naturally remains less than capacity. However, we argue that in certain cases --- especially at some of the world's busiest airports in metropolitan areas --- demand does exceed capacity. When this occurs, the airport is subject to strict schedule coordination
or slot control (\cite{jacquillat2018roadmap}), which leads to additional, but not unrealistically large costs. Therefore, potential overcapacity scenarios should be carefully considered in the capacity expansion planning process. To address this issue, \cite{taylor1984note} proposed a modified Davidson function for assignment problems in traffic networks. Inspired by \cite{taylor1984note}, we propose the following modification to the delay costs:
\begin{equation}
\label{eq:modified_davidson}
    \widetilde{D}(q, K) = \begin{cases}
        D(q, K), &\quad q \leq \mu K,\\
        D_{\mu} + K_{\mu} (q - \mu K), &\quad q > \mu K,
    \end{cases}
\end{equation}
In Eq. (\ref{eq:modified_davidson}), $\mu$ is a constant. $D_\mu$ is the value of the delay function at $q = \mu K$ and $K_\mu$ is the slope of the delay function at at the same point. These quantities are calculated as follows:
\begin{align}
    D_\mu &= D(q, \mu K),\\
    K_\mu &= \frac{\partial D(q, \mu K)}{\partial q} (\mu K).
\end{align}
We take Eq. (\ref{eq:delay_costs_def}) as the delay function. Calculation shows that
\begin{equation}
\resizebox{0.91\textwidth}{!}{%
$
 D(q, K) = Aq\left(1 + \alpha\left(\frac{q}{K}\right)^{\beta}\right), D_{\mu} = A\mu K(1 + \alpha \mu^{\beta}), K_{\mu} = A(1 + \alpha(\beta+1)\mu^{\beta}).
 $%
}
\end{equation}
Therefore, the modified delay costs can be calculated as follows:
\begin{equation}
\label{eq:modified_delay}
    \widetilde{D}(q, K) = \begin{cases}
        Aq\left(1 + \alpha\left(\frac{q}{K}\right)^{\beta}\right), &\quad q \leq \mu K,\\
        A(1 + \alpha(\beta+1)\mu^{\beta})q - A\alpha\beta\mu^{\beta + 1} K, &\quad q > \mu K.
    \end{cases}
\end{equation}
For simplicity, we set $\mu = 1$. Substituting Eq. (\ref{eq:modified_delay}) and $\mu = 1$ into Eq. (\ref{eq:system_cost_def}) yields
\begin{equation}
\label{eq:cost_function_def}
    C(q) = \begin{cases}
        cq + c_h K + Aq\left(1 + \alpha\left(\frac{q}{K}\right)^{\beta}\right), &\quad q \leq K,\\
        cq + c_h K + A(1 + \alpha(\beta+1))q - A\alpha\beta K, &\quad q > K.
        \end{cases}
\end{equation}

We consider two stages of airport runway system operations:
\begin{itemize}
    \item \textbf{Stage 1}: Before the runway capacity expansion, the airport operates at its initial capacity with calibrated parameter $K_0$.
    \item \textbf{Stage 2}: After the runway capacity expansion, the airport operates at an expanded capacity, reflected as change of parameter $K_0$ to $K_0+\Delta K$.
\end{itemize}

Substitute $K = K_0$ and $K = K_0 + \Delta K$ into Eq. (\ref{eq:cost_function_def}), we obtain the hourly system cost for Stage 1 and Stage 2, which are, respectively:
\begin{equation}
    \label{eq:C1}
    C_1(q) = \begin{cases}
        \dfrac{A\alpha}{K_0^{\beta}} q^{\beta + 1} + (A + c)q + c_h K_0, &\quad q \leq K_0,\\
        \left(A\left(1 + \alpha(\beta+1)\right) + c\right)q + (c_h - A\alpha\beta) K_0, &\quad q > K_0,
        \end{cases}
\end{equation}
\begin{equation}
\label{eq:C2}
\resizebox{0.91\textwidth}{!}{%
$
    C_2(q, \Delta K) = \begin{cases}
        \dfrac{A\alpha}{(K_0+\Delta K)^{\beta}} q^{\beta + 1} + (A + c)q + c_h (K_0+\Delta K), &\quad q \leq K_0+\Delta K,\\
        \left(A\left(1 + \alpha(\beta+1)\right) + c\right)q + (c_h - A\alpha\beta) (K_0+\Delta K), &\quad q > K_0+\Delta K,
        \end{cases} $%
        }
\end{equation}
where the subscript ``1" and ``2" denotes Stage 1 and Stage 2, respectively.

In addition to the costs associated with the airport runway system, another significant component is the one-time lump-sum investment required for the capacity expansion project, which includes both fixed and variable costs. Once the capacity expansion project is initiated, fixed costs are incurred regardless of the expansion size, while variable costs increase proportionally with the expansion size. The lump-sum investment cost can be expressed as follows:
\begin{equation}
    I = f + v\Delta K,
\end{equation}
where $f$ is the fixed costs and $v \Delta K$ is the variable costs. We assume that the capacity expansion project is completed instantly for ease of derivation, although in reality, a project could take several years. This assumption allows us to derive an elegant analytical solution, facilitating easier calculations without altering the key conclusions of results. Therefore, once the capacity expansion project is initiated, a one-time investment cost $I$ is incurred immediately. Notably, $f$ is incurred even if $\Delta K = 0$.

Among the components of airport runway system costs, only delay costs decrease with capacity expansion. Thus, delay cost savings represent a key benefit of expanding runway capacity. In contrast, operating costs and capacity holding costs increase with expansion to accommodate growing demand. Therefore, the trade-off between the savings from reduced delay costs and the additional costs of expansion must be carefully evaluated when making capacity expansion decisions.

\subsection{Deterministic dynamic model}
\label{sec:deterministic_dynamic_model}

In a deterministic setting, demand $Q_t$ grows over time according to the exponential growth function with a given growth rate $\eta$. $Q_t$ follows the following ordinary differential equation:
\begin{equation}
\label{eq:deterministic_Q_t_ode}
    \frac{d Q_t}{Q_t} = \eta dt.
\end{equation}
The demand function is derived by solving Eq. (\ref{eq:deterministic_Q_t_ode}):
\begin{equation}
\label{eq:deterministic_Q_t}
    Q_t = Q_0 e^{\eta t},
\end{equation}
where $Q_0$ is the initial demand at $t=0$. 

Following \cite{dangl1999investment, balliauw2020expanding}, we consider the cumulative cost discounted from the time of the capacity expansion project. Let the investment timing be $T$. If no capacity expansion project is undertaken, the cumulative system cost will only include operating costs, capacity holding costs, and delay costs. Since future costs are discounted from time $T$, the cumulative system cost is calculated as follows:
\begin{equation}
\label{eq:TC_1}
    TC_1(T) = \int_T^{\infty} N_p C_1(Q_t) e^{-\rho (t-T)} dt,
\end{equation}
where $N_p$ is the average runway operating hours per year, and $\rho$ is the discount rate.

Next, we analyze Stage 2, where the capacity expansion project is initiated at time $T$, and the airport runway system operates under the expanded capacity $K_0 + \Delta K$ from that point onward. The cumulative system cost, which includes the investment costs, is given as follows:
\begin{equation}
\label{eq:TC_2}
    TC_2(T, \Delta K) = \int_{T}^{\infty} N_p C_2(Q_t) e^{-\rho (t-T)} dt + (f + v \Delta K).
\end{equation}

We aim to maximize the difference between the cumulative costs given by Eqs. (\ref{eq:TC_1}) and (\ref{eq:TC_2}), specifically to maximize $TC_1(T) - TC_2(T, \Delta K)$. Let $q$ denote the demand at the investment timing $T$. We express the cumulative cost difference as a function of starting demand $q$ and expansion size $\Delta K$, denoted as $F(q, \Delta K)$. This function represents the cost savings accumulated from the investment timing $T$, given that the demand is $q$ at time $T$ and the expansion size is $\Delta K$. Therefore, as detailed calculations provided in \ref{sec: cal_detail_1}, we derive the following expression for $F(q, \Delta K)$:
\begin{equation}
\label{eq:deterministic_expected_cumulative_cost_difference}
    F(q, \Delta K)
    = TC_1(T) - TC_2(T, \Delta K) =\int_{0}^{\infty}\left[\Delta C(Q_t, \Delta K)  e^{-\rho t} dt\right].
\end{equation}
In Eq. (\ref{eq:deterministic_expected_cumulative_cost_difference}), $\Delta C(q, \Delta K)$ represents the annual cost difference function, defined as follows: 
\begin{equation}
\label{eq:deterministic_cost_difference_rate}
\resizebox{0.91\textwidth}{!}{%
$
\begin{aligned}
\Delta C(q, \Delta K) 
        &= N_p\left(C_1(q) - C_2(q, \Delta K)\right) - \rho(f + v \Delta K) \\
        &= \begin{cases}
          \Delta C_1(q, \Delta K) = \alpha_1 q^{\beta + 1} + \alpha_2, & \text{if } q < K_0, \\
          \Delta C_2(q, \Delta K) = \alpha_3 q^{\beta + 1} + \alpha_4 q + \alpha_5, & \text{if } K_0 \leq q < K_0 + \Delta K, \\
          \Delta C_3(q, \Delta K) = \alpha_6, & \text{if } q \geq K_0 + \Delta K,
          \end{cases}
\end{aligned}
$%
}
\end{equation}
where $C_1$ and $C_2$ are defined in Eqs. (\ref{eq:C1}) and (\ref{eq:C2}). In Eq. (\ref{eq:deterministic_cost_difference_rate}), $\alpha_1 = N_pA\alpha(K_0^{-\beta} - (K_0+\Delta K)^{-\beta}), \alpha_2 = - (N_p c_h + \rho v) \Delta K - \rho f, \alpha_3 = - N_pA\alpha(K_0+\Delta K)^{-\beta}, \alpha_4 = N_p A\alpha(\beta + 1), \alpha_5 = - (N_p c_h + \rho v) \Delta K - N_p A\alpha\beta K_0 - \rho f$, and $\alpha_6 = - (N_p (c_h-A\alpha\beta) + \rho v) \Delta K - \rho f$.

\begin{theorem}
\label{thm: F_deterministic}
    The cost saving function for the deterministic dynamic model with starting demand $q$ and investment size $\Delta K$ is given by
    \begin{equation}
        F(q, \Delta K) = A_1q^{\beta + 1} + A_2q^{\frac{\rho}{\eta}}+A_3,
    \end{equation}
    where $A_1, A_2, A_3$ are functions of $\Delta K$.
\end{theorem}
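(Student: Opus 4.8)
The plan is to reduce the integral $F(q,\Delta K)=\int_0^{\infty}\Delta C(Q_t,\Delta K)e^{-\rho t}\,dt$ of Eq.~(\ref{eq:deterministic_expected_cumulative_cost_difference}) to a finite sum of elementary exponential integrals. First I would substitute the explicit demand path $Q_t=qe^{\eta t}$ from Eq.~(\ref{eq:deterministic_Q_t}); the integrand then inherits the three-piece structure of Eq.~(\ref{eq:deterministic_cost_difference_rate}). Since $t\mapsto Q_t$ is strictly increasing and unbounded, for a starting demand $q\le K_0$ there are unique crossing times $t_1=\tfrac1\eta\ln(K_0/q)$ and $t_2=\tfrac1\eta\ln\!\big((K_0+\Delta K)/q\big)$ at which $Q_t$ reaches $K_0$ and $K_0+\Delta K$, so
\[
F(q,\Delta K)=\int_0^{t_1}\!\Delta C_1\, e^{-\rho t}\,dt+\int_{t_1}^{t_2}\!\Delta C_2\, e^{-\rho t}\,dt+\int_{t_2}^{\infty}\!\Delta C_3\, e^{-\rho t}\,dt .
\]
On each interval the integrand is a linear combination of terms $q^{k}e^{(k\eta-\rho)t}$ with $k\in\{0,1,\beta+1\}$ (coefficients read off from $\alpha_1,\dots,\alpha_6$), so every piece integrates in closed form.

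The second step is the bookkeeping of the boundary terms. Using $e^{\eta t_1}=K_0/q$ and $e^{\eta t_2}=(K_0+\Delta K)/q$, any contribution from an interior endpoint $t_i\in\{t_1,t_2\}$ collapses to a single power of $q$: the term $\tfrac{q^{k}}{k\eta-\rho}e^{(k\eta-\rho)t_i}$ equals $\tfrac{1}{k\eta-\rho}\,(\text{threshold}_i)^{\,k-\rho/\eta}\,q^{\rho/\eta}$, i.e.\ $q^{\rho/\eta}$ times a function of $\Delta K$ and $K_0$, and each $e^{-\rho t_i}$ endpoint term is likewise $(\text{threshold}_i)^{-\rho/\eta}q^{\rho/\eta}$. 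By contrast the endpoint $t=0$ (which occurs only in the first integral, i.e.\ only against $\Delta C_1=\alpha_1 q^{\beta+1}+\alpha_2$) contributes the powers $q^{\beta+1}$ and $q^{0}$, while the upper limit $t=\infty$ contributes nothing. Collecting like powers of $q$: the coefficient of $q^{\beta+1}$ receives a single contribution, $A_1=\alpha_1/\big(\rho-\eta(\beta+1)\big)$; the constant receives a single contribution, $A_3=\alpha_2/\rho$; every remaining term is proportional to $q^{\rho/\eta}$, and summing their coefficients gives a function $A_2(\Delta K)$. In particular no term in $q^{1}$ survives, because the $\alpha_4 q$ term lives inside $\Delta C_2$, whose integral has no $t=0$ endpoint. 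This is exactly $F(q,\Delta K)=A_1 q^{\beta+1}+A_2 q^{\rho/\eta}+A_3$ with $A_1,A_2,A_3$ depending only on $\Delta K$ (through the $\alpha_i$ and the two thresholds), which is the claim.

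Two points need care. On convergence: the only improper integral is the tail $\int_{t_2}^{\infty}\alpha_6 e^{-\rho t}\,dt$, which converges because $\rho>0$; the first two integrals run over bounded intervals and are finite for \emph{every} parameter choice, so no condition such as $\rho>\eta(\beta+1)$ is required --- this is precisely why the linearized overcapacity branch of the delay cost in Eq.~(\ref{eq:modified_delay}) matters. On the domain: the stated closed form is for $q\le K_0$; if $q$ already lies in $[K_0,K_0+\Delta K)$ (or beyond) the first (respectively also the second) piece drops out and $A_1$ changes accordingly, which could be flagged in a remark but does not affect the method. I expect the main obstacle to be purely the organization of this bookkeeping --- tracking which of the many boundary terms land on $q^{\beta+1}$, on $q^{\rho/\eta}$, or on the constant --- together with recording the genericity conditions $\eta(\beta+1)\neq\rho$ and $\eta\neq\rho$ under which the antiderivatives $e^{ct}/c$ used above are legitimate.
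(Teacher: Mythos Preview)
Your proposal is correct and follows essentially the same route as the paper: substitute $Q_t=qe^{\eta t}$, split the integral at the crossing times $t_1=\tfrac1\eta\ln(K_0/q)$ and $t_2=\tfrac1\eta\ln((K_0+\Delta K)/q)$, integrate each piece in closed form, and then collapse the interior boundary contributions to multiples of $q^{\rho/\eta}$ while the $t=0$ endpoint supplies the $q^{\beta+1}$ and constant terms. Your explicit identifications $A_1=\alpha_1/(\rho-\eta(\beta+1))$ and $A_3=\alpha_2/\rho$ agree with the paper's, and your remarks on convergence, the domain restriction $q\le K_0$, and the genericity conditions $\eta(\beta+1)\neq\rho$, $\eta\neq\rho$ are useful additions that the paper does not make explicit.
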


\begin{proof} 
See \ref{sec: proof_F_deterministic}.
\end{proof}

For traditional NPV methods, the capacity expansion project is initiated when it becomes profitable, i.e., when $F(q,\Delta K) \geq 0$ (\cite{dixit1994investment}). To determine the trigger demand, we first need to find the optimal $\Delta K$ for each $q$ that maximizes $F(q,\Delta K)$. Then, the trigger demand is given by:
\begin{equation}
    q_{NPV} = \min\{q: \max_{\Delta K}~\{F(q,\Delta K)\} \geq 0\}.
\end{equation}
The expansion size at the trigger demand $q_{NPV}$ is given by:
\begin{equation}
    \Delta K_{NPV} = \argmax_{\Delta K} ~\{F(q_{NPV},\Delta K)\}.
\end{equation}

It is crucial to note that we cannot directly maximize $F(q,\Delta K)$ to find the trigger demand and the corresponding expansion size, as demand is a dynamically changing process, and the cost savings at the present time cannot be compared to that at a future time. Readers may refer to \cite{dixit1994investment} for further insights.

\subsection{Stochastic dynamic model}
\label{sec:stochastic_dynamic_model}
\subsubsection{Uncertain demand}
Real-world air traffic demand is inherently uncertain, exhibiting fluctuations similar to those of the stock market. This uncertainty is reflected in data from a reliable open platform that tracks real-time flight activity worldwide, as shown in Figure \ref{fig:covid-19} (\cite{flightradar24}). For example, Chicago O'Hare International Airport (ORD), one of the busiest airports in the United States, experienced 18,369 delayed arrivals and departures in June 2019 alone, with a delay rate of 31.98\% (\cite{bts}). Another source of uncertainty comes from rare events that cause sudden drops in demand. A significant example is the sharp decline in the number of flights following the unexpected onset of the COVID-19 pandemic, which resulted in a dramatic 63.5\% drop in passenger air traffic in 2020. 

\begin{figure}[ht]
\centering
\includegraphics[width=10cm]{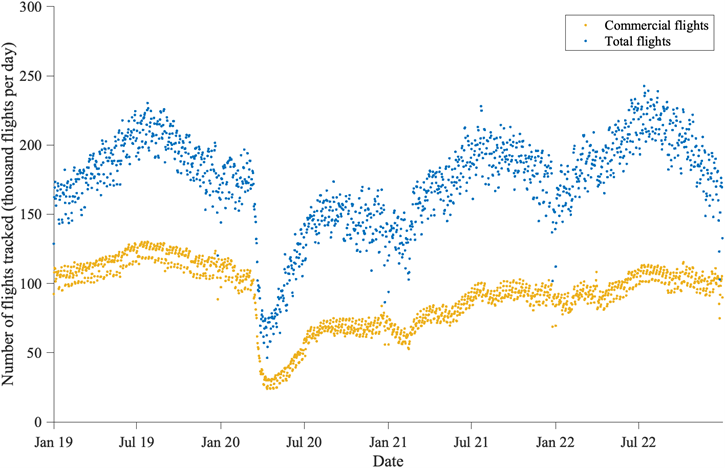}
\caption{\label{fig:covid-19}Number of flights tracked during the COVID-19 pandemic.}
\end{figure}

Expanding an airport's runway capacity requires a significant financial investment. Failing to account for demand uncertainty may result in two critical issues. On the one hand, if an airport overestimates future demand and expands capacity unnecessarily, it may lead to underutilized infrastructure and wasted resources, resulting in financial losses. On the other hand, if demand is underestimated, the airport might experience congestion, flight delays, and a decline in service quality, all of which can negatively impact both revenue and reputation. By accounting for demand uncertainty, airport authorities can optimize resource allocation and tailor expansion plans to align more accurately with projected demand, ensuring that expansion meets actual needs. 

To better model uncertainty in demand growth, we adopt a jump-diffusion process, as outlined in \cite{guo2023investment}, to capture demand dynamics. This process combines elements of a Geometric Brownian Motion (GBM) and a Poisson process, effectively representing the stochastic nature of air traffic demand. The stochastic differential equation for demand is expressed as follows:
\begin{equation}
\label{eq:sde_basic}
    \frac{d Q(t)}{Q(t)} = \eta dt + \sigma d w(t) + d \sum_{i=1}^{N_t} Z_i,
\end{equation}
where $\eta$ represents the demand growth rate, and $\sigma$ refers to the volatility rate. Here, $\eta$ reflects the expected change in demand, while $\sigma$ captures the degree of random fluctuations. The term $dw(t)$ represents an increment of Brownian motion. By definition, $dw(t)=\epsilon_t\sqrt{t}$ , where $\epsilon_t$ follows the standard normal distribution, i.e., $\epsilon_t \sim N(0,1)$. In the jump component $\sum_{i=1}^{N_t} Z_i$ , $N_t$ is modeled by a Poisson process with intensity $\lambda$. $Z_i$ is the magnitude of the $i$-th jump (e.g., a percentage change such as a 10\% decrease) and is independent of GBM. Therefore, the jump component in Eq. (\ref{eq:sde_basic}), i.e., $d\sum_{i=1}^{N_t} Z_i$ , equals $Z_i$ if a jump occurs, with probability $\lambda dt$; otherwise, it equals 0 with probability $(1-\lambda dt)$. The jump sizes $Z_i$ are independently and identically distributed, following the same distribution as $Z$. $Z$ can be a constant or random variable.

\subsubsection{Optimal capacity expansion strategy}
\label{sec:Optimal capacity expansion strategy}
Similar to the deterministic dynamic model in Section \ref{sec:deterministic_dynamic_model}, we aim to maximize the expectation of the cumulative cost difference between Eq. (\ref{eq:TC_1}) and Eq. (\ref{eq:TC_2}), specifically maximizing $\mathbb{E}[TC_1(\tau) - TC_2(\tau, \Delta K)]$. In the deterministic dynamic model, the demand threshold corresponds to a fixed timing, whereas in the stochastic case, it corresponds to a random timing $\tau$. Since $\tau$ is a random variable, expectation calculation is required. We seek to find a demand threshold, denoted as $q^*$, which is the optimal trigger demand for initiating the capacity expansion project. 

Let the demand at the investment timing be denoted as $q$. We write the expected cumulative cost difference as a function of $q$ and the expansion size $\Delta K$, represented as $F (q, \Delta K)$:
\begin{equation}
\label{eq:expected_cumulative_cost_difference}
\resizebox{0.91\textwidth}{!}{%
$
    F (q, \Delta K)
    = \mathbb{E}_{|Q(\tau)=q}\left[TC_1(\tau) - TC_2(\tau, \Delta K)\right]= \mathbb{E}_{|Q(0)=q}\int_{0}^{\infty}\left[\Delta C(Q(t), \Delta K)  e^{-\rho t} dt\right],
$%
}
\end{equation}
where the cost difference function $\Delta C(q, \Delta K)$ is defined in Eq. (\ref{eq:deterministic_cost_difference_rate}). The detailed calculations can be found in \ref{sec: cal_detail_2}.

We consider the following decision problem: given the demand level $q$, should the airport authority initiate the capacity expansion project immediately or wait? Let $V(q)$ be the option function. If the demand at time $t$ is larger than the demand threshold at the investment time $\tau$, the decision has been made, so $V(q)$ satisfies the following equation:
\begin{equation}
\label{eq:option_post}
    V(q | Q(t) = q \geq  Q(\tau)) = \max_{\Delta K}\{F(q, \Delta K)\}.
\end{equation}
If the decision is not made, $V(q)$ satisfies the following dynamic programming formula:
\begin{equation}
\label{eq:option_pred}
    V(q | Q(t) = q <  Q(\tau)) = e^{-\rho dt} \mathbb{E} [V(q) + d V(q)].
\end{equation}
Combining Eqs. (\ref{eq:option_post}) and (\ref{eq:option_pred}), and considering that for any time $t$, there exists the option to initiate the capacity expansion project immediately or wait (i.e., ``exercise the option" or ``keep the option"), the dynamic programming formula for $V(q)$ is:
\begin{equation}
\label{eq:option_dp}
    V(q) = \max\left\{e^{-\rho dt} \mathbb{E} [V(q) + d V(q)],\quad \max_{\Delta K} \{F(q, \Delta K)\}\right\}.
\end{equation}

\begin{theorem}
\label{thm:F_function}
The expected cost saving function for the stochastic dynamic model with starting demand $q$ and investment size $\Delta K$ is given by
    \begin{equation}
    F(q, \Delta K)_{q \in R_i} = F_i(q, \Delta K) = A_{i, 1} q^{b_1} + A_{i, 2} q^{b_2} + \overline{F_i}(q, \Delta K),\quad i = 1,2,3,
    \end{equation}
    where $R_1 = \{q: q < K_0\}, R_2 = \{q: K_0 \leq q < K_0 + \Delta K\}, R_3 = \{q: q \geq K_0 + \Delta K\}$. $\overline{F_i}$ is given by
    \begin{equation}
    \label{eq:ode_particular_sol}
    \resizebox{0.91\textwidth}{!}{%
$
    \begin{aligned}
        \overline{F_1}(q, \Delta K) &= -\frac{\alpha_1}{\sigma^2\beta(\beta+1)/2 + \eta(\beta+1) + \lambda \mathbb{E}_Z[(1+Z)^{\beta + 1}] - (\lambda + \rho)} q^{\beta + 1} + \frac{\alpha_2}{\rho},\\
        \overline{F_2}(q, \Delta K) &=-\frac{\alpha_3}{\sigma^2\beta(\beta+1)/2 + \eta(\beta+1) + \lambda \mathbb{E}_Z[(1+Z)^{\beta + 1}] - (\lambda + \rho)} q^{\beta + 1} -\frac{\alpha_4}{\eta + \lambda \mathbb{E}_Z[Z] - \rho} q + \frac{\alpha_5}{\rho},  \\
        \overline{F_3}(q, \Delta K) &= \frac{\alpha_6}{\rho},
    \end{aligned}
    $%
    }
    \end{equation}
    $b_1 > 1, b_2 < 0$ are two solutions of the following equation:
    \begin{equation}
    \label{eq:b_first_appear}
        \varphi(b) := \frac{\sigma^2}{2}b(b-1) + \eta b + \lambda \mathbb{E}_Z (1+Z)^{b} -(\rho+\lambda) = 0,
    \end{equation}
    and $A_{i,j} (i\in\{1,2,3\}, j\in\{1,2\})$ are functions of $\Delta K$.
\end{theorem}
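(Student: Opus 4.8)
The plan is to derive the functional form of $F(q,\Delta K)$ by setting up and solving the second-order ordinary differential equation that the option/cost-saving function satisfies on each demand region $R_i$. First I would expand the dynamic programming equation \eqref{eq:option_dp} in the continuation region: writing $V(q)=F(q,\Delta K)$ there (since we are looking at the value of having already committed to expansion size $\Delta K$ once demand crosses the threshold), I apply Itô's lemma for jump-diffusions to $dF$. The drift term contributes $\eta q F'(q)$, the diffusion term contributes $\tfrac{\sigma^2}{2}q^2 F''(q)$, and the jump term contributes $\lambda\,\mathbb{E}_Z[F((1+Z)q)-F(q)]$; equating the expected instantaneous change discounted at rate $\rho$ against the running cost difference $\Delta C(q,\Delta K)$ yields, after dividing by $dt$ and dropping higher-order terms, the integro-differential equation
\begin{equation}
\frac{\sigma^2}{2}q^2 F''(q) + \eta q F'(q) + \lambda\,\mathbb{E}_Z\!\left[F((1+Z)q)\right] - (\rho+\lambda)F(q) + \Delta C(q,\Delta K) = 0 . \nonumber
\end{equation}
Because $\Delta C$ is piecewise defined across $R_1,R_2,R_3$, this equation must be solved region by region, giving the three pieces $F_i$.

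Next I would find the homogeneous solutions. The key observation is that the equation is an Euler-type equation once we account for the jump term: trying $F(q)=q^b$ turns the operator into the scalar $\tfrac{\sigma^2}{2}b(b-1)+\eta b+\lambda\,\mathbb{E}_Z[(1+Z)^b]-(\rho+\lambda)$, which is exactly $\varphi(b)$ in \eqref{eq:b_first_appear}. So the homogeneous solution space is spanned by $q^{b_1}$ and $q^{b_2}$ where $b_1,b_2$ are the roots of $\varphi(b)=0$; I would verify that $\varphi$ is convex (the $\tfrac{\sigma^2}{2}b(b-1)$ and $\lambda\mathbb{E}_Z[(1+Z)^b]$ terms are convex in $b$) with $\varphi(0)=-\rho<0$ and $\varphi(1)=\eta+\lambda\mathbb{E}_Z[Z]-\rho<0$ under the standard economic assumption that the discount rate exceeds the risk-adjusted growth rate, which forces one root $b_1>1$ and one root $b_2<0$, matching the claimed sign pattern. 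For the particular solutions, since $\Delta C$ on $R_1$ is $\alpha_1 q^{\beta+1}+\alpha_2$, I would try an ansatz of the same shape, $\overline{F_1}=c\,q^{\beta+1}+d$; plugging in, the $q^{\beta+1}$ coefficient gives $c\cdot\varphi(\beta+1)+\alpha_1=0$, i.e. $c=-\alpha_1/\varphi(\beta+1)$, and $\varphi(\beta+1)=\tfrac{\sigma^2}{2}\beta(\beta+1)+\eta(\beta+1)+\lambda\mathbb{E}_Z[(1+Z)^{\beta+1}]-(\rho+\lambda)$, exactly the denominator in \eqref{eq:ode_particular_sol}; the constant term gives $d=\alpha_2/\rho$. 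The same procedure on $R_2$ (now with an extra linear term $\alpha_4 q$, handled by the ansatz coefficient $-\alpha_4/(\eta+\lambda\mathbb{E}_Z[Z]-\rho)$ since $\varphi(1)$ reduces appropriately after the $b(b-1)$ term vanishes) and on $R_3$ (constant right-hand side $\alpha_6$, giving $\alpha_6/\rho$) produces the three stated particular solutions. Adding homogeneous plus particular on each region gives $F_i(q,\Delta K)=A_{i,1}q^{b_1}+A_{i,2}q^{b_2}+\overline{F_i}(q,\Delta K)$.

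The main obstacle — and the step I would treat most carefully — is justifying that the jump term genuinely reduces to the multiplicative factor $\mathbb{E}_Z[(1+Z)^b]$ and that the resulting piecewise-Euler structure is legitimate. Unlike the pure-GBM case, the jump can carry $q$ from one region $R_i$ into another in a single step, so strictly speaking $\mathbb{E}_Z[F((1+Z)q)]$ mixes the three functional pieces rather than staying within $F_i$. I would address this by noting (as is standard in this literature, e.g. following \cite{guo2023investment}) that the expansion \eqref{eq:expected_cumulative_cost_difference} is defined as the expected discounted integral of $\Delta C$ along the path \emph{regardless} of regions, so $F$ is a single globally defined function whose restriction to each $R_i$ solves the ODE with the \emph{locally} valid source term $\Delta C_i$; the homogeneous part $q^b$ is an eigenfunction of the full integro-differential operator on all of $(0,\infty)$, so the decomposition is consistent, and the genuinely region-specific content lives entirely in the constants $A_{i,j}$, which are pinned down afterward by value-matching and smooth-pasting at $q=K_0$ and $q=K_0+\Delta K$ together with the boundary behavior as $q\to 0$ and $q\to\infty$ (no bubble terms). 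I would remark that determining these $A_{i,j}$ explicitly as functions of $\Delta K$ is routine linear algebra once the form is established, and is precisely what the theorem defers by stating only that the $A_{i,j}$ are functions of $\Delta K$.
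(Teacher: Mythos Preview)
Your proposal is correct and follows essentially the same route as the paper: derive the integro-differential equation for $F$ via It\^o's lemma for jump diffusions applied to the Bellman recursion, solve it region by region as an Euler-type equation with homogeneous solutions $q^{b_1},q^{b_2}$ determined by the characteristic equation $\varphi(b)=0$, obtain the particular solutions $\overline{F_i}$ by matching the polynomial form of $\Delta C_i$, and pin down the $A_{i,j}$ by value-matching/smooth-pasting at $K_0,\,K_0+\Delta K$ together with the boundary conditions $A_{1,2}=0$ (absorbing barrier at $q=0$) and $A_{3,1}=0$ (no explosion as $q\to\infty$). Two minor remarks: the Bellman equation you should start from is the one for $F$ itself, namely $F=\Delta C\,dt+e^{-\rho\,dt}\mathbb{E}[F+dF]$ (the paper's Eq.~\eqref{eq:cost_difference_dp}), rather than Eq.~\eqref{eq:option_dp} for the option value $V$; and your convexity argument for the roots of $\varphi$ is a clean alternative to the paper's bounding argument, while your explicit discussion of jumps crossing region boundaries is actually more careful than the paper, which treats the equation as if it were purely local.
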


\begin{proof}
See \ref{sec: proof_F_stochastic}.
\end{proof}

Given $F(q, \Delta K)$, one can derive the optimal expansion size $\Delta K$ for any given $q$ by maximizing $F(q, \Delta K)$. The function value of $\max_{\Delta K} F(q, \Delta K)$ can be treated as the ``termination payoff" when ``exercising the option". Having solved the inner maximization of Eq. (\ref{eq:option_dp}), we now turn to the outer maximization to determine the investment timing --- specifically, at which level of $q$ the investment decision should be made. If the investment is not made at demand $q$, the option function $V(q)$ satisfies the following theorem.

\begin{theorem}
\label{thm: solving_V}
    The option function $V(q)$ is given by
    \begin{equation}
    \label{eq:V_thm_eq}
        V(q) = \overline{A_1} q^{b_1},
    \end{equation}
    where
    \begin{equation}
    \label{eq:q*V_thm_eq_A1}
        \overline{A_1} = \max_{q}\left\{\frac{\max_{\Delta K} F(q, \Delta K)}{q^{b_1}}\right\},
    \end{equation}
    $b_1$ is given by Theorem \ref{thm:F_function}.
    \begin{proof}
        See \ref{sec: proof_trigger}.
    \end{proof}
\end{theorem}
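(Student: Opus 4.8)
The plan is to turn the continuation-region equation~(\ref{eq:option_pred}) into an integro-differential equation via Itô's formula for jump diffusions, solve it explicitly, and then fix the free constant by the value-matching/smooth-pasting conditions at the trigger level. Writing $V(q) = e^{-\rho\,dt}\,\mathbb{E}[V(q) + dV(q)]$, expanding $e^{-\rho\,dt} = 1 - \rho\,dt + o(dt)$, using $dQ/Q = \eta\,dt + \sigma\,dw + d\sum Z_i$ together with $\mathbb{E}[dw]=0$, $\mathbb{E}[(dw)^2]=dt$, and $\mathbb{E}[\,dV\text{ from a jump}\,] = \lambda\,dt\,\mathbb{E}_Z[V(q(1+Z)) - V(q)]$, then dividing by $dt$ and letting $dt\to 0$, I obtain
\begin{equation}
\frac{\sigma^2}{2}q^2 V''(q) + \eta q V'(q) + \lambda\,\mathbb{E}_Z\!\left[V(q(1+Z)) - V(q)\right] - \rho V(q) = 0 .
\end{equation}
This is the homogeneous counterpart of the equation in Theorem~\ref{thm:F_function}: substituting the trial solution $V = q^b$ replaces $\lambda\,\mathbb{E}_Z[V(q(1+Z)) - V(q)]$ by $\lambda\,\mathbb{E}_Z[(1+Z)^b - 1]q^b$ and reduces the equation to $\varphi(b) = 0$ from Eq.~(\ref{eq:b_first_appear}). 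Hence the general solution is $V(q) = \overline{A_1}q^{b_1} + \overline{A_2}q^{b_2}$ with $b_1 > 1 > 0 > b_2$; unlike the $F_i$ in Theorem~\ref{thm:F_function}, there is no particular term here because no cost savings accrue before the investment.

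Next I would impose the boundary behaviour at $q\to 0^+$. When demand is negligible, the option to expand is worthless, so $V(q)\to 0$; since $b_2 < 0$ makes $q^{b_2}\to\infty$, this forces $\overline{A_2} = 0$, leaving $V(q) = \overline{A_1}q^{b_1}$, which is Eq.~(\ref{eq:V_thm_eq}). It remains to identify $\overline{A_1}$ and the trigger demand $q^*$. Writing $G(q) := \max_{\Delta K} F(q,\Delta K)$ for the termination payoff (the solved inner maximization of Eq.~(\ref{eq:option_dp})), the free boundary is characterized by value matching and smooth pasting,
\begin{equation}
\overline{A_1}(q^*)^{b_1} = G(q^*), \qquad b_1\,\overline{A_1}(q^*)^{b_1 - 1} = G'(q^*) .
\end{equation}
Eliminating $\overline{A_1}$ gives $q^* G'(q^*) = b_1 G(q^*)$, which is exactly the stationarity condition for $q\mapsto G(q)/q^{b_1}$; substituting back yields $\overline{A_1} = G(q^*)/(q^*)^{b_1} = \max_q\{G(q)/q^{b_1}\}$, i.e.\ Eq.~(\ref{eq:q*V_thm_eq_A1}).

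To close the argument I would give a brief verification that this candidate solves the dynamic program~(\ref{eq:option_dp}): the inequality $V(q) = \overline{A_1}q^{b_1} \ge G(q)$ holds for every $q$ directly from $\overline{A_1} = \max_q\{G(q)/q^{b_1}\}$, while by construction $V$ satisfies $\mathcal L V - \rho V = 0$ on $\{q < q^*\}$ and coincides with the exercise payoff on $\{q \ge q^*\}$, so both branches of the maximum in~(\ref{eq:option_dp}) are consistent. A clean probabilistic alternative, which I would at least mention, uses threshold policies $\tau_{\hat q} = \inf\{t : Q(t)\ge\hat q\}$ and the discounted first-passage identity $\mathbb{E}[e^{-\rho\tau_{\hat q}} \mid Q(0)=q] = (q/\hat q)^{b_1}$ for $q\le\hat q$ (exact because the crisis jumps are downward, so upward passage has no overshoot), giving $\mathbb{E}[e^{-\rho\tau_{\hat q}}G(Q(\tau_{\hat q}))] = q^{b_1} G(\hat q)/\hat q^{b_1}$, whose maximization over $\hat q$ reproduces $V(q) = \overline{A_1}q^{b_1}$ and identifies $q^*$ as the maximizer. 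The main obstacle is justifying smooth pasting rigorously — equivalently, the optimality of the threshold policy and the kinks of $G$ at $q = K_0$ and $q = K_0 + \Delta K$; I would handle this by arguing that the optimal $q^*$ lies in the interior of a single smooth piece of $G$ (consistent with the numerical results), so the classical smooth-pasting derivation applies without modification there.
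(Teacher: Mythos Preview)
Your proposal is correct and follows essentially the same route as the paper: derive the homogeneous integro-differential equation from Eq.~(\ref{eq:option_pred}) via It\^o's lemma, obtain $V(q)=\overline{A_1}q^{b_1}+\overline{A_2}q^{b_2}$, kill $\overline{A_2}$ by the behaviour at $q\to 0$, and pin down $\overline{A_1}$ and $q^*$ through value matching together with smooth pasting (which the paper treats as equivalent to maximizing $\overline{F}(q)/q^{b_1}$, the computation you also carry out). Your extras---the verification inequality $V(q)\ge G(q)$ from $\overline{A_1}=\max_q G(q)/q^{b_1}$, the first-passage identity $\mathbb{E}[e^{-\rho\tau_{\hat q}}]=(q/\hat q)^{b_1}$ exploiting the downward-only jumps, and the remark about kinks of $G$---go beyond what the paper actually proves and would make the argument more self-contained, but they do not change the underlying strategy.
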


To determine the optimal trigger demand $q^*$ and the expansion size $\Delta K^*$, we apply the ``value-matching" and ``smooth-pasting" conditions (\cite{dixit1994investment}), with additional details provided in Appendix \ref{sec: proof_trigger}. The results are summarized in Theorem \ref{thm:trigger} below.

\begin{theorem}
\label{thm:trigger}
    The trigger demand for the stochastic dynamic model is given by
    \begin{equation}
        q^* = \argmax\limits_{q}~ \left\{\frac{\max_{\Delta K} F(q, \Delta K)}{q^{b_1}}\right\},
    \end{equation}
    where $F(q, \Delta K)$ and $b_1$ is given by Theorem \ref{thm:F_function}.

    The expansion size at the trigger demand $q^*$ is given by
    \begin{equation}
    \label{eq:expansion_size_stoc}
        \Delta K^* = \argmax\limits_{\Delta K}~ \left\{F(q^*, \Delta K)\right\}.
    \end{equation}
\begin{proof}
    See \ref{sec: proof_trigger}.
\end{proof}
\end{theorem}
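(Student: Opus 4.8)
The plan is to treat $q^{*}$ as a free boundary separating a \emph{continuation region} $\{q<q^{*}\}$, where it is optimal to keep the option alive, from a \emph{stopping region} $\{q\ge q^{*}\}$, where it is optimal to invest immediately, and to pin down $q^{*}$ and $\Delta K^{*}$ through the value-matching and smooth-pasting conditions applied to the Bellman equation \eqref{eq:option_dp}. Write $G(q):=\max_{\Delta K}F(q,\Delta K)$ for the termination payoff; by Theorem \ref{thm:F_function} this is an explicit, piecewise function of $q$, and by the envelope theorem $G'(q)=\partial_{q}F(q,\Delta K)\big|_{\Delta K=\Delta K^{*}(q)}$, so the inner maximizer never has to be differentiated. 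On the continuation region the first branch of \eqref{eq:option_dp} dominates; expanding $e^{-\rho dt}\mathbb{E}[V(q)+dV(q)]$ with the jump-diffusion generator of $Q$ and discarding $o(dt)$ terms reduces it to the homogeneous version of the ODE solved in Theorem \ref{thm:F_function}, whose general solution is a linear combination of $q^{b_{1}}$ and $q^{b_{2}}$ with $b_{1}>1>0>b_{2}$ the roots of \eqref{eq:b_first_appear}. Since congestion vanishes as $q\downarrow 0$, the option is worthless there, i.e.\ $V(q)\to 0$; this kills the $q^{b_{2}}$ term and leaves $V(q)=\overline{A_{1}}\,q^{b_{1}}$, recovering Theorem \ref{thm: solving_V}.

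Next I would impose the two boundary conditions at $q=q^{*}$. Value matching requires $\overline{A_{1}}(q^{*})^{b_{1}}=G(q^{*})$, and smooth pasting requires $\overline{A_{1}}b_{1}(q^{*})^{b_{1}-1}=G'(q^{*})$. Eliminating $\overline{A_{1}}$ between the two yields the single optimality condition
\[
q^{*}\,G'(q^{*}) = b_{1}\,G(q^{*}).
\]
Since $\frac{d}{dq}\!\left(G(q)/q^{b_{1}}\right)=\left(q\,G'(q)-b_{1}G(q)\right)/q^{b_{1}+1}$, this is precisely the first-order condition for a stationary point of $H(q):=G(q)/q^{b_{1}}=\left(\max_{\Delta K}F(q,\Delta K)\right)/q^{b_{1}}$. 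Solving value matching for the constant then gives $\overline{A_{1}}=H(q^{*})$, and comparing with Theorem \ref{thm: solving_V}, where $\overline{A_{1}}=\max_{q}H(q)$, forces $q^{*}$ to be the global maximizer of $H$, i.e.\ $q^{*}=\argmax_{q}\{\max_{\Delta K}F(q,\Delta K)/q^{b_{1}}\}$. With $q^{*}$ fixed, the optimal expansion size is whatever $\Delta K$ attains $G(q^{*})=\max_{\Delta K}F(q^{*},\Delta K)$, which is exactly \eqref{eq:expansion_size_stoc}.

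The main obstacle is the last step of the previous paragraph: verifying that the stationary point produced by smooth pasting is genuinely the \emph{global} maximizer of $H$ (and is unique), rather than a local extremum or an inflection. This needs a second-order or monotonicity argument exploiting the explicit structure of Theorem \ref{thm:F_function} --- the signs $b_{1}>1$, $b_{2}<0$, $\alpha,\beta>0$, and the fact that $F(q,\Delta K)$ is assembled from the convex power $q^{\beta+1}$ (with $\beta+1>1$) plus lower-order terms --- to show $H$ is single-peaked over the relevant range of $q$. A related technical point is that $G=\max_{\Delta K}F$ must be shown to be $C^{1}$ across the kinks $q=K_{0}$ and $q=K_{0}+\Delta K$ of the piecewise definition \eqref{eq:deterministic_cost_difference_rate}, so that smooth pasting is well posed; here one checks directly that $F$ and $\partial_{q}F$ agree across the regions $R_{1},R_{2},R_{3}$, which holds because the modified delay cost \eqref{eq:modified_delay} was constructed precisely to be continuously differentiable. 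Finally, one should remark in passing that a one-sided threshold stopping rule is indeed optimal for this problem --- a standard consequence of the convex/monotone structure of $G$ relative to the option value (\cite{dixit1994investment}) --- which is what justifies searching for a single boundary $q^{*}$ in the first place.
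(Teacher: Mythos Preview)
Your proposal is correct and mirrors the paper's own proof in Appendix~\ref{sec: proof_trigger}: both derive the homogeneous ODE for $V$ from the Bellman equation, kill the $q^{b_2}$ branch via the absorbing barrier at $q=0$, and then combine value-matching with smooth-pasting at $q^{*}$ to obtain the first-order condition $q^{*}G'(q^{*})=b_{1}G(q^{*})$, which the paper shows (Appendix~\ref{sec: value_matching}) is equivalent to $q^{*}=\argmax_{q}\{G(q)/q^{b_{1}}\}$. The only difference is cosmetic---the paper starts from ``the option value must be maximized'' and then verifies smooth-pasting, whereas you start from smooth-pasting and back out the $\argmax$ characterization---and your explicit flagging of the global-versus-local maximizer issue, the $C^{1}$ regularity of $G$, and the threshold-rule optimality are technical caveats the paper itself leaves implicit.
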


\section{Model applications}
\label{sec: Model Applications}
The baseline numerical inputs for both deterministic and stochastic models are provided in Table \ref{tab:variable_definition}.

\subsection{Deterministic analyses}

In this subsection, we calculate the optimal investment decision for airport runway capacity expansion using the deterministic dynamic model from Section \ref{sec:deterministic_dynamic_model}. Considering that the runway capacity is discrete, we set a maximum of 40 operations per hour per runway (according to \cite{blumstein1959landing, newell1979airport}). Thus, the expansion size, $\Delta K$, must be a multiple of 40. Initially, the airport has one runway, with $K_0 = 40$. Solving the model yields the optimal trigger of $q_{NPV} = 5.4$ operations per hour and an expansion size of $\Delta K_{NPV} = 40$ operations per hour, indicating that one additional runway is required.

To further investigate the relationship between $Q_0$ and $\Delta K$, we consider the following investment question: Given a decision to invest immediately with the current demand level at $Q_0$, what is the optimal choice for capacity expansion? The results are shown in Figure \ref{fig:determ_q_delta_K}, where the dashed yellow line represents the expansion size $\Delta K$, and the green solid line represents the cost saving function $F$ given the current demand level and the corresponding optimal expansion size. For notation convenience here and in the following text, we denote $F(q) = \max_{\Delta K}~\{F(q, \Delta K)\}$. For any $Q_0$, the optimal expansion size is given by $\Delta K (Q_0) = \argmax_{\Delta K}~\{F(Q_0, \Delta K)\}$, while the function value of $F$ is given by $F(Q_0) = F(Q_0, \Delta K (Q_0)) = \max_{\Delta K}~\{F(Q_0, \Delta K)\}$.
The figure shows
that a higher value of $Q_0$ implies a greater expansion size. When $Q_0$ is small, $\Delta K = 0$, indicating that the optimal choice is to ``keep the option" without expanding runway capacity. As $Q_0$ increases, the optimal choice shifts to adding one, two, or more runways. Thus, higher initial demand levels require additional runways to accommodate flights efficiently.

\begin{figure}[ht]
\centering
\includegraphics[width=13cm]{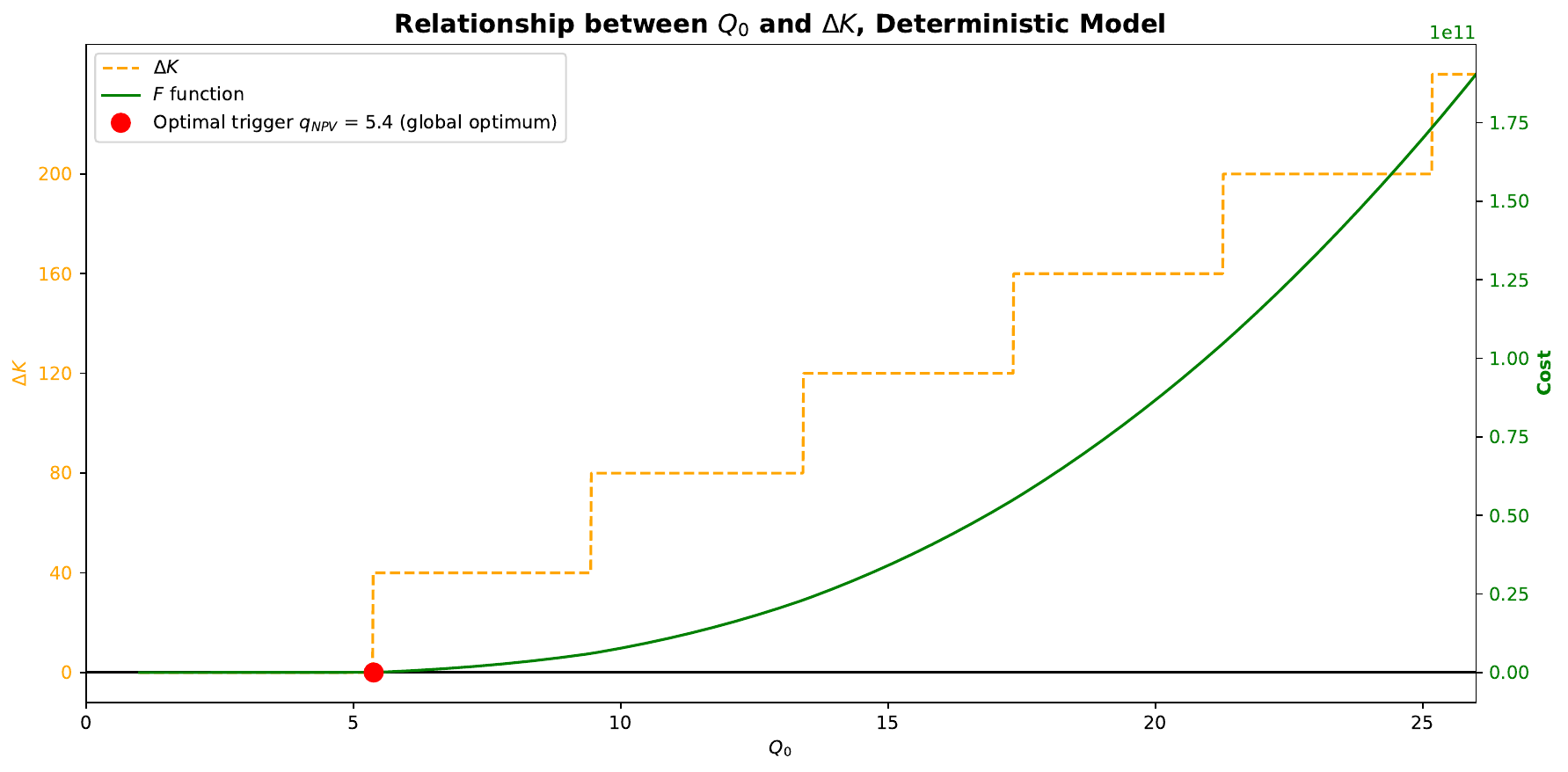}
\caption{\label{fig:determ_q_delta_K}Relationship between $Q_0$ and $\Delta K$ for the deterministic model.}
\end{figure}

Moreover, if the expansion size $\Delta K$ is fixed, the trigger demand for the deterministic model is given in Figure \ref{fig:determ_delta_K_q}. From the figure, it is evident that if the expansion size increases, the corresponding trigger demand also increases.

\begin{figure}[ht]
\centering
\includegraphics[width=13cm]{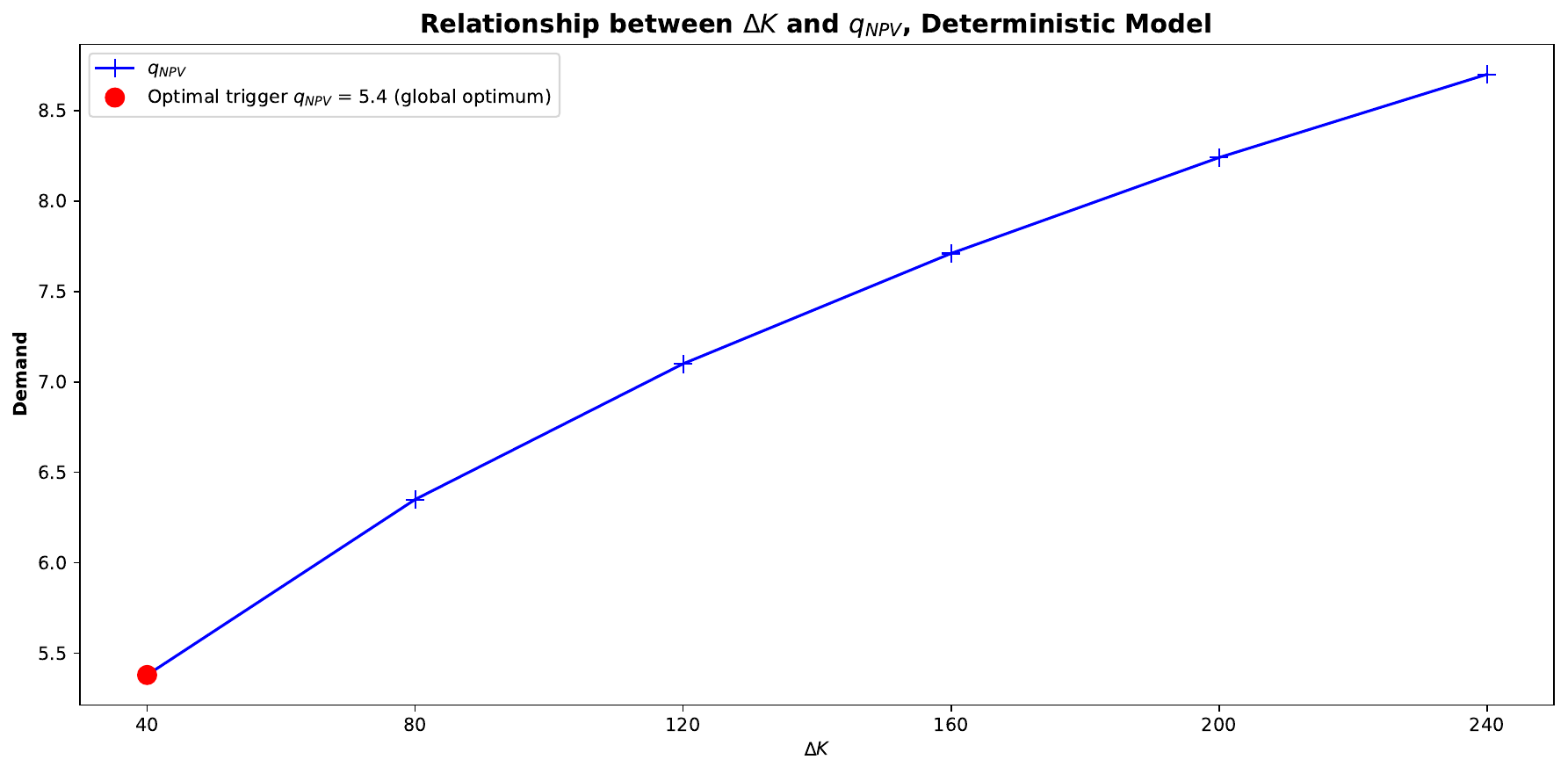}
\caption{\label{fig:determ_delta_K_q}Relationship between $\Delta K$ and $q_{NPV}$ for the deterministic model.}
\end{figure}

\subsection{Stochastic analyses}

It is important to understand that the stochastic process defined by Eq. (\ref{eq:sde_basic}) encompasses an infinite number of potential demand trajectories, resulting from inherent uncertainty. This variability introduces additional complexity into capacity expansion planning. Figure \ref{fig:trajectory} illustrates how demand generally follows an exponential growth trend, influenced by stochastic fluctuations and potential drops.

\begin{figure}[ht]
\centering
\includegraphics[width=13cm]{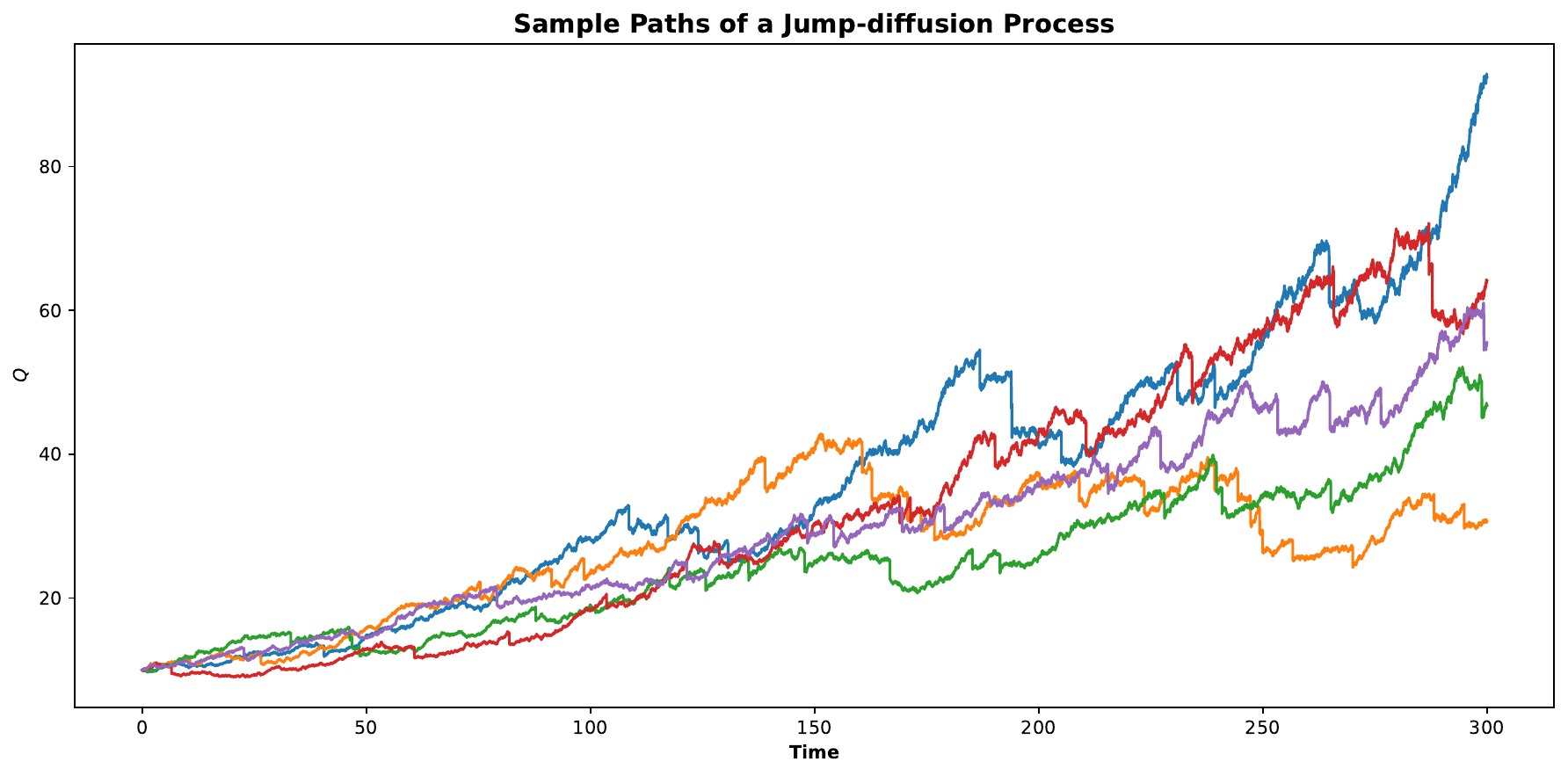}
\caption{\label{fig:trajectory}Sample paths of a jump-diffusion process.}
\end{figure}

The jump magnitude $Z$ may follow different probability distributions. For simplicity, we consider the jump process with a constant magnitude for each downward jump, i.e., $Z_i = - 10\%, i = 1,\cdots, N_t$. The values of all other parameters are present in Table \ref{tab:variable_definition}. The solution for the stochastic model follows from Theorem \ref{thm:trigger}. Figure \ref{fig:solve_V_F_stochastic} displays both the $F$ and $V$ functions, with the red star indicating the trigger demand $q^*$, where the $F$ and $V$ functions intersect (value-matching) and have matching slopes (smooth-pasting). The trigger demand is $q^* = 13.7$ operations per hour, with a corresponding investment size of $\Delta K^* = 80$ operations per hour. Compared to $q_{NPV}$, the real options approach advocates for postponing the investment decision: the investment is only made if there is sufficient evidence that current demand growth is driven by a sustained upward trend rather than temporary fluctuations. Additionally, a larger expansion size is required to accommodate the higher trigger demand. The stochastic dynamic model is more realistic, as it incorporates two layers of uncertainty that can lead to fluctuations, as represented by the jump-diffusion process: first, volatility introduced by Brownian motion, and second, potential downward jumps driven by the Poisson process. These uncertainties necessitate a more conservative capacity expansion decision.

\begin{figure}[ht]
\centering
\includegraphics[width=13cm]{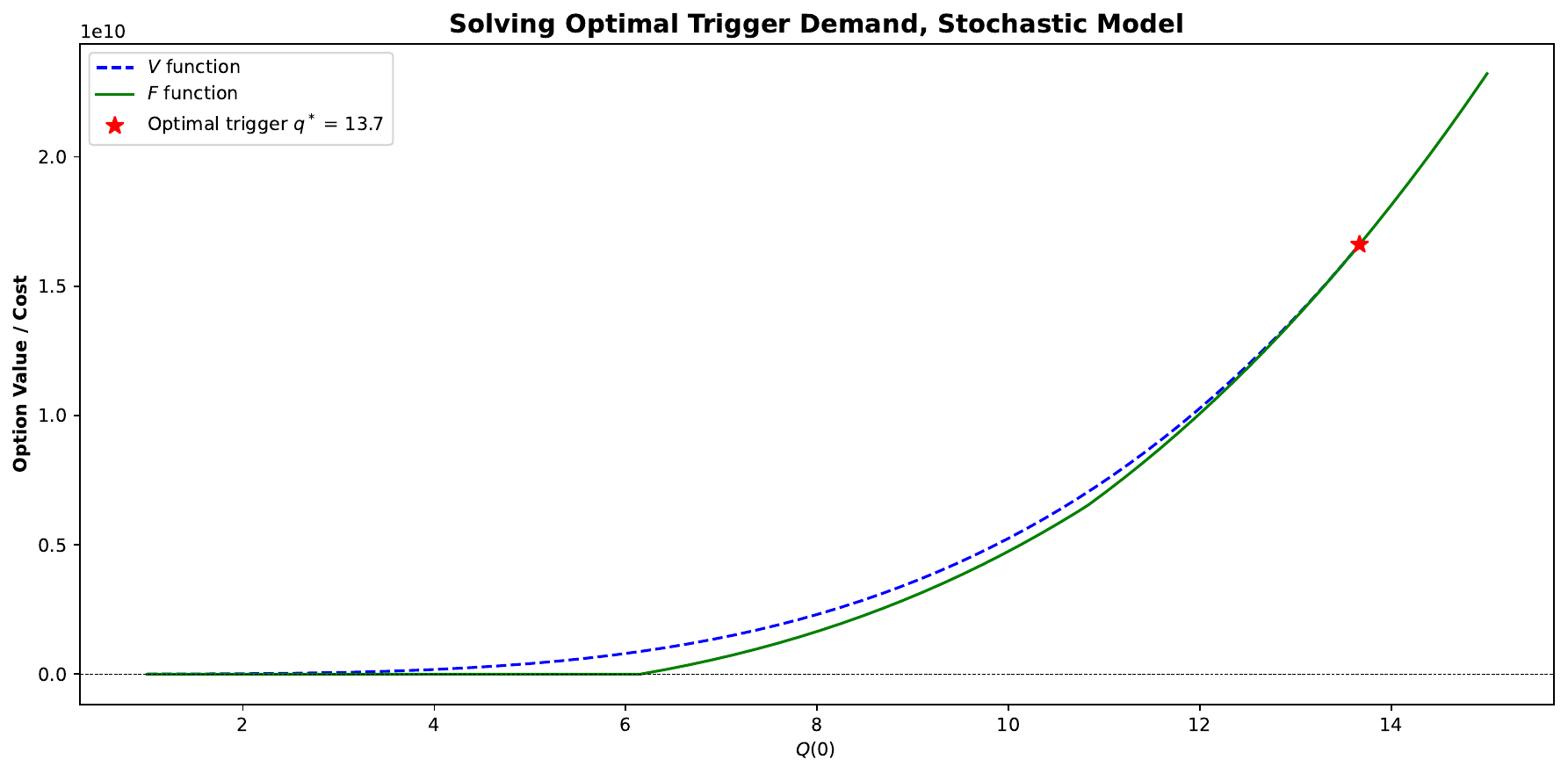}
\caption{\label{fig:solve_V_F_stochastic}Solving optimal trigger demand for the stochastic model.}
\end{figure}

In a manner akin to the deterministic analyses, Figure \ref{fig:stocha_q_delta_K} presents the capacity expansion choice given the current demand level at $Q(0)$. The trend of the relationship between $Q(0)$ and $\Delta K$ is similar to the deterministic model, where $\Delta K$ is a stepwise function of the initial demand. As $Q(0)$ increases, $\Delta K$ increases as well, indicating the need for more airport runways. It is also evident in the figure that the real options model postpones the investment decision: in terms of expectation, the $F$ function becomes positive at a smaller demand compared with $q^*$. However, considering the uncertainties inherent in demand growth, along with the risk-averse attitude of decision makers, it is not optimal to make the capacity expansion decision once the expected cost saving function becomes positive. Figure \ref{fig:stocha_delta_K_q} presents the trigger demand given the expansion size $\Delta K$. Similar to the deterministic model, an increase in the expansion size leads to a corresponding increase in the trigger demand.

\begin{figure}[ht]
\centering
\includegraphics[width=13cm]{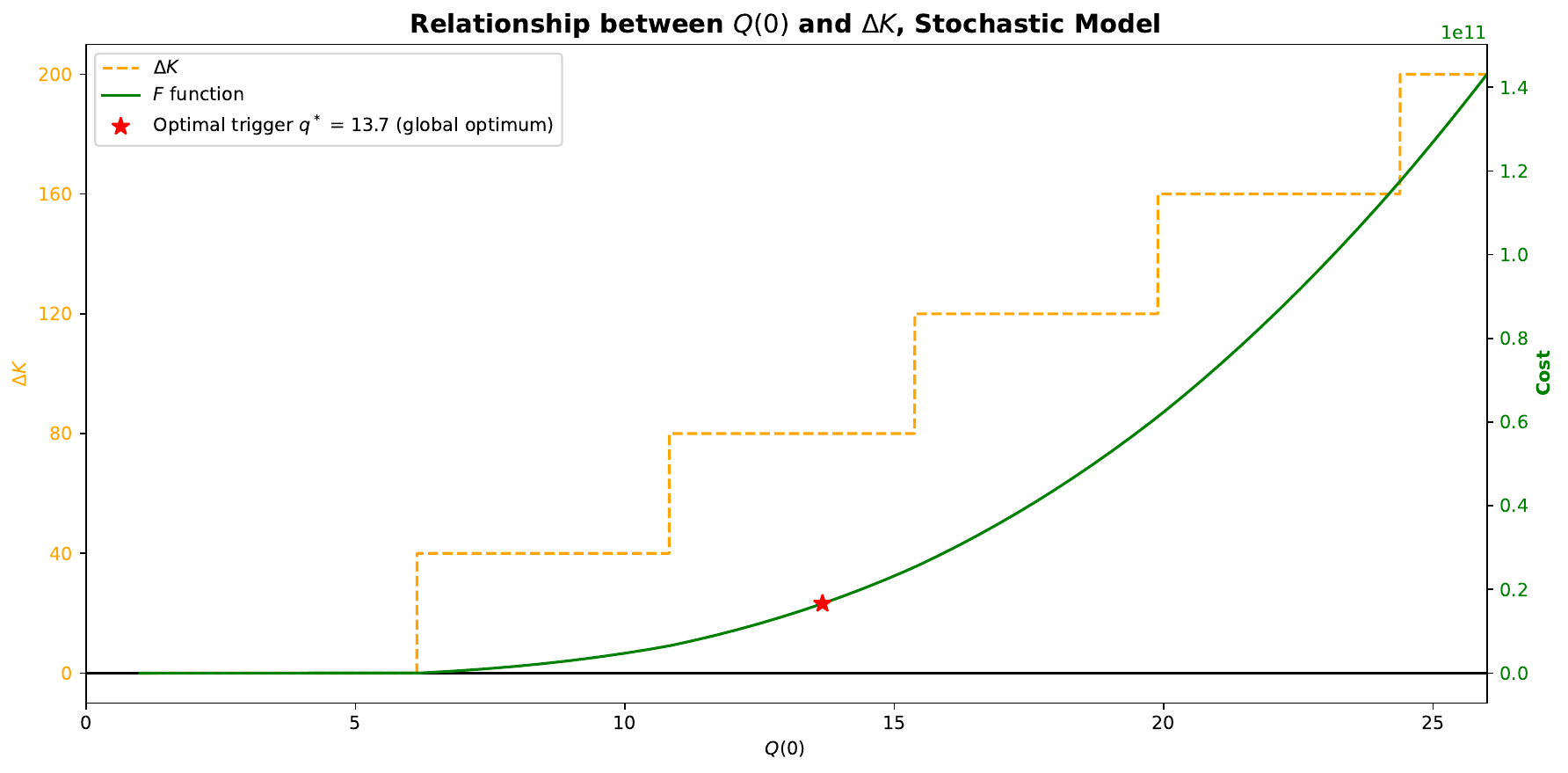}
\caption{\label{fig:stocha_q_delta_K}Relationship between $Q(0)$ and $\Delta K$ for the stochastic model.}
\end{figure}

\begin{figure}[ht]
\centering
\includegraphics[width=13cm]{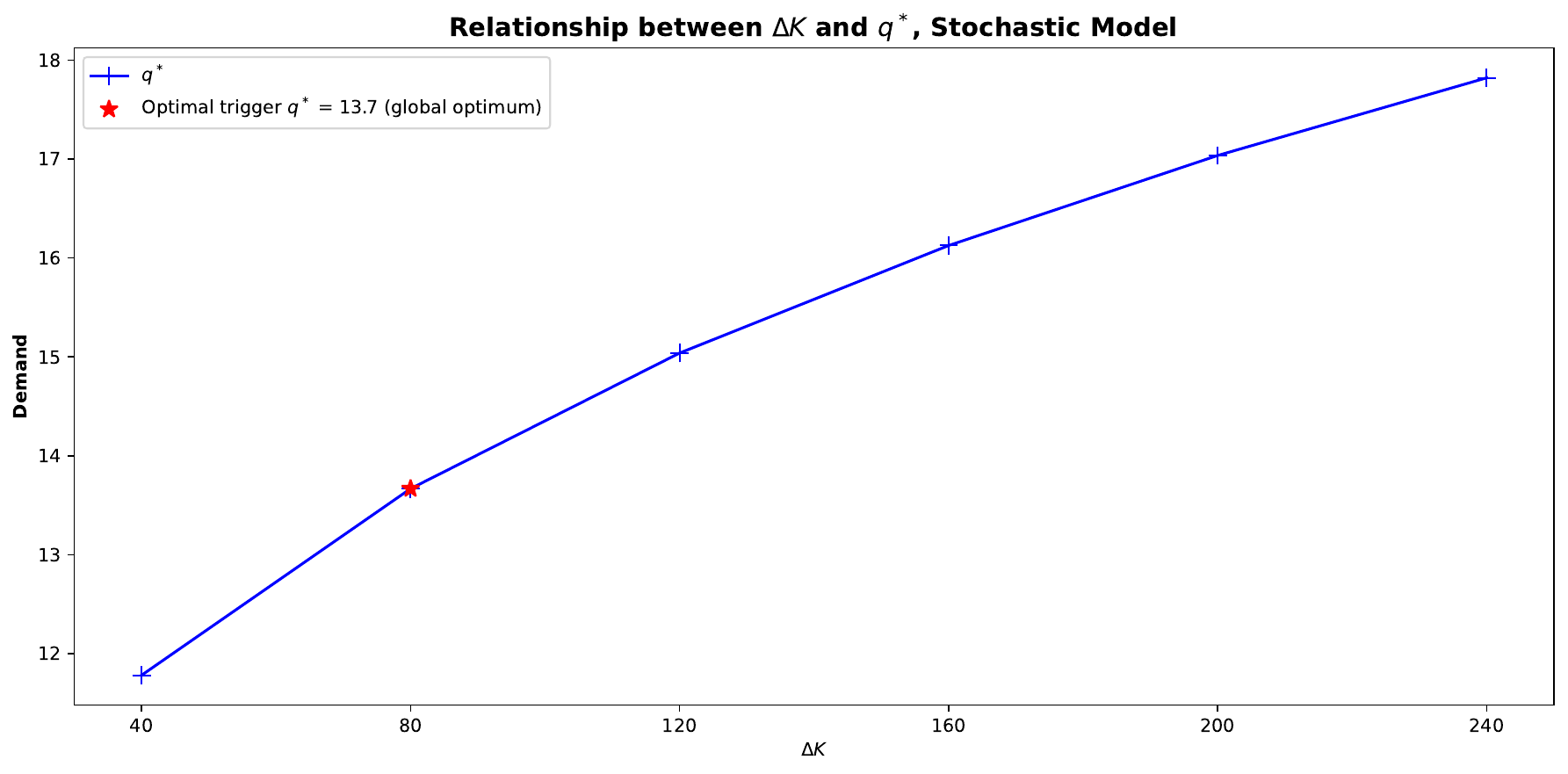}
\caption{\label{fig:stocha_delta_K_q}Relationship between $\Delta K$ and $q^*$ for the stochastic model.}
\end{figure}

\subsection{Sensitivity analyses}

In this subsection, we conduct sensitivity analyses of several key parameters while keeping other variables unchanged for the stochastic dynamic model. The baseline numerical inputs are provided in Table \ref{tab:variable_definition}. First, we vary the initial capacity $K_0$, which represents scenarios where airports initially have different numbers of runways. The results are shown in Figure \ref{fig:sensitivity_K0}. As $K_0$ increases, both the trigger demand $q^*$ and the expansion size $\Delta K^*$ increase. This is intuitive because the congestion effect plays a significant role only when demand approaches the capacity. Therefore, a larger initial capacity suggests a higher trigger demand. In order to accommodate the fast-growing demand, a larger expansion size is also required.

\begin{figure}[ht]
\centering
\includegraphics[width=13cm]{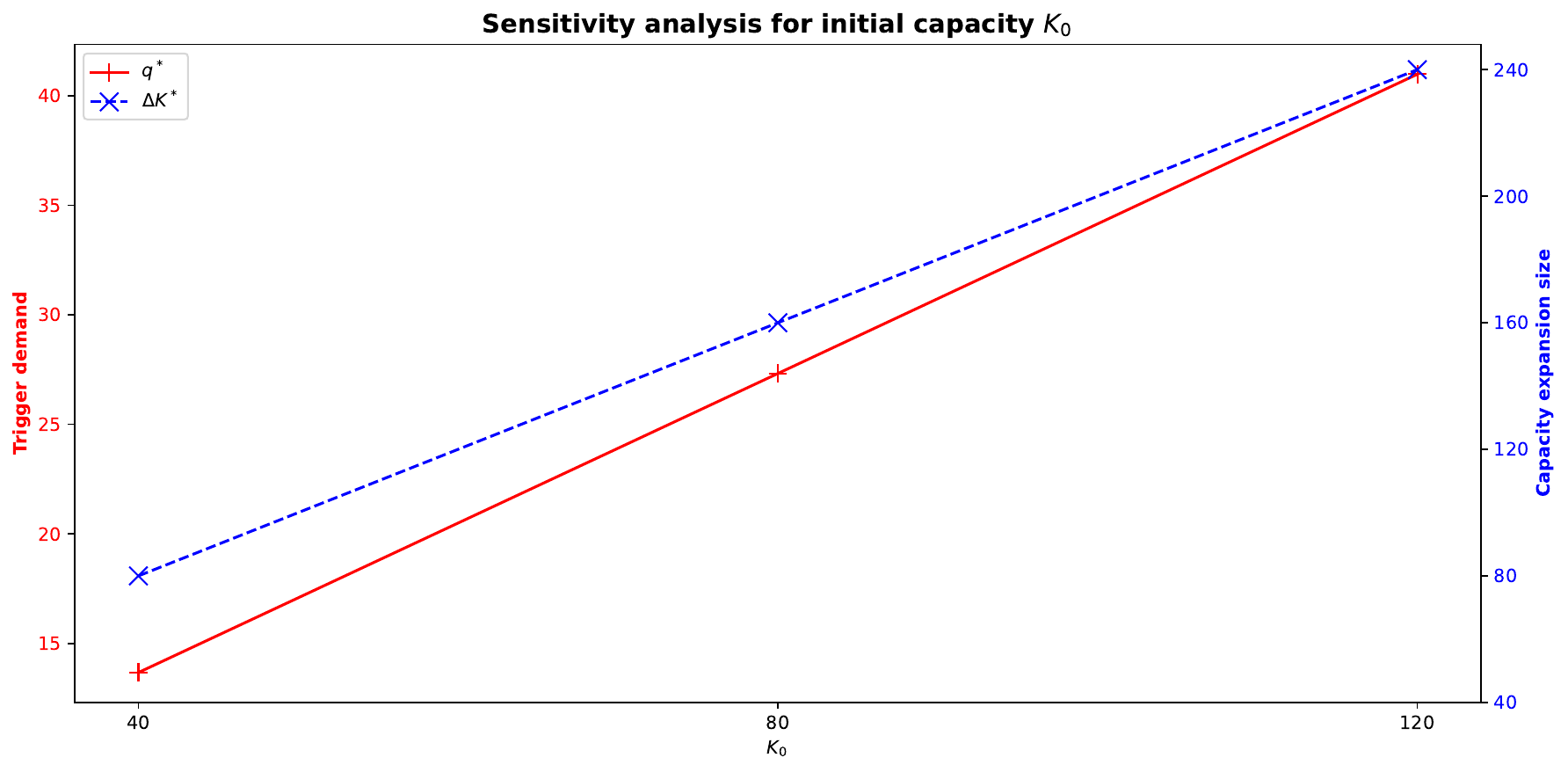}
\caption{\label{fig:sensitivity_K0}Sensitivity analysis for initial capacity $K_0$.}
\end{figure}

Figure~\ref{fig:sensitivity_eta} shows the optimal decisions for different demand growth rates $\eta$. As $\eta$ increases, the optimal expansion size $\Delta K^*$ also increases. However, the relationship between $\eta$ and $q^*$ is more complex. With the same expansion size $\Delta K^*$, a larger $\eta$ indicates a lower trigger demand $q^*$. At the point when $\Delta K^*$ jumps to the next level, an upward jump in $q^*$ is also required, because only a larger threshold can accommodate a larger expansion size. This relationship, although consistent with the literature, presents its distinct features compared to studies assuming that the expansion size is a continuous variable, such as \cite{guo2018time, guo2023investment}. This provides valuable insights for decision-makers: if the expansion size is predetermined, anticipating a higher demand growth rate can justify advancing the investment decision; however, when it becomes clear that a larger expansion size is necessary, the decision should be postponed until the demand grows sufficiently to accommodate that additional capacity.

\begin{figure}[ht]
\centering
\includegraphics[width=13cm]{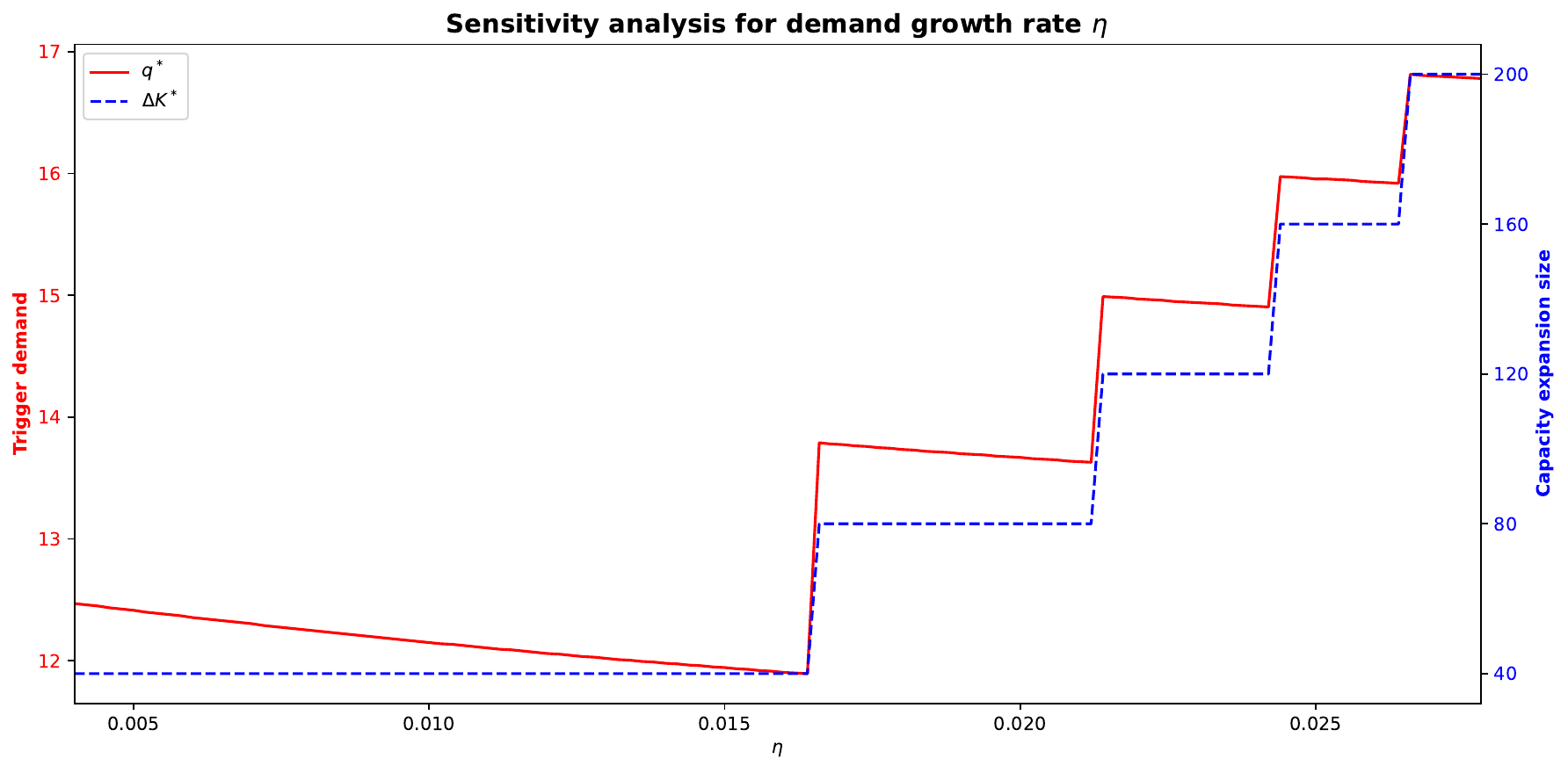}
\caption{\label{fig:sensitivity_eta}Sensitivity analysis for demand growth rate $\eta$.}
\end{figure}

Figure \ref{fig:sensitivity_sigma} presents the optimal decisions for different volatility rates $\sigma$. As $\sigma$ increases, both the optimal expansion size $\Delta K^*$ and the trigger demand $q^*$ also increase. A higher volatility rate indicates greater uncertainty in demand growth, necessitating a larger trigger demand to ensure the profitability of the capacity expansion decision. Furthermore, when the expansion size jumps --- indicating that an additional runway is required --- the trigger demand $q^*$ also experiences a jump. This occurs because a larger expansion size necessitates a higher demand level to justify its effectiveness.

\begin{figure}[ht]
\centering
\includegraphics[width=13cm]{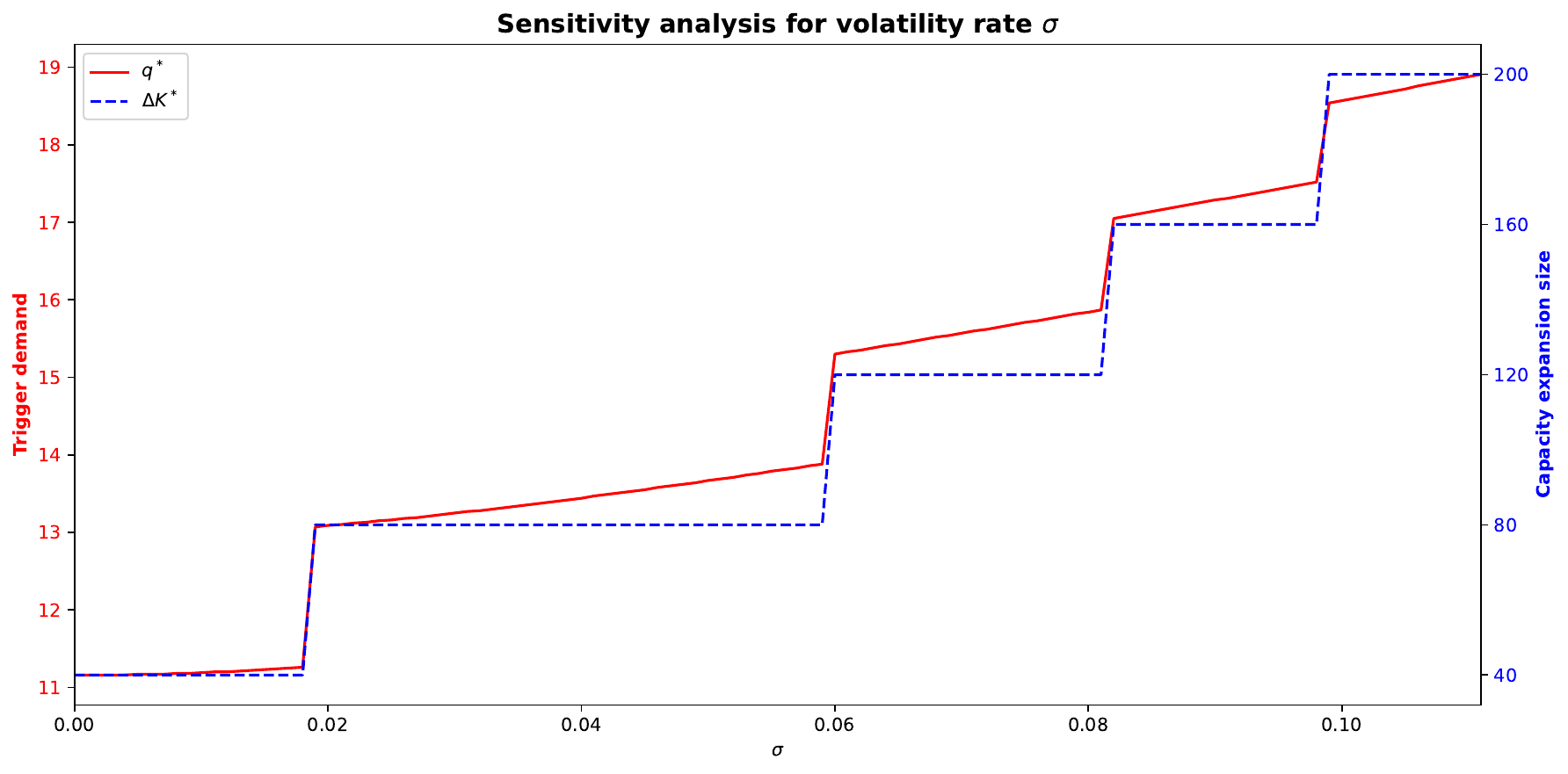}
\caption{\label{fig:sensitivity_sigma}Sensitivity analysis for volatility rate $\sigma$.}
\end{figure}

Figure \ref{fig:sensitivity_jump_size} illustrates the optimal decisions for varying jump sizes. A larger jump size corresponds to a smaller expansion size, as increased uncertainty necessitates a more conservative investment. Regarding the trigger demand $q^*$, with a fixed expansion size, it increases with a larger jump size, as more evidence is required for the investment decision, resulting in a postponed expansion. However, at the point when the expansion size drops, the trigger demand drops as well. A similar trend is observed when altering the jump probability $\lambda$, as shown in Figure \ref{fig:sensitivity_jump_prob}. 

\begin{figure}[ht]
    \centering
    \subfigure[Sensitivity analysis for jump size.]{%
    \resizebox{6.7cm}{!}{\label{fig:sensitivity_jump_size}\includegraphics{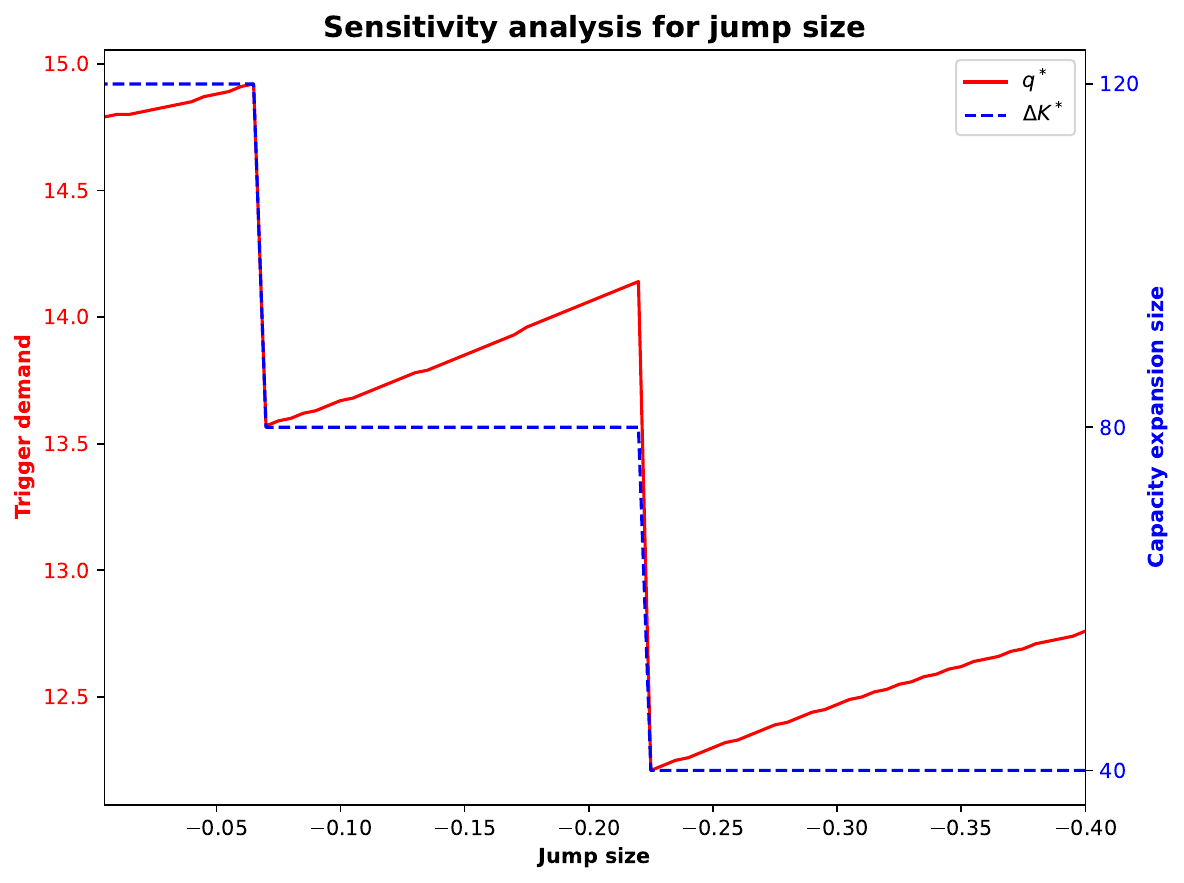}}}\hspace{5pt}
    \subfigure[Sensitivity analysis for jump probability.]{%
    \resizebox{6.7cm}{!}{\label{fig:sensitivity_jump_prob}\includegraphics{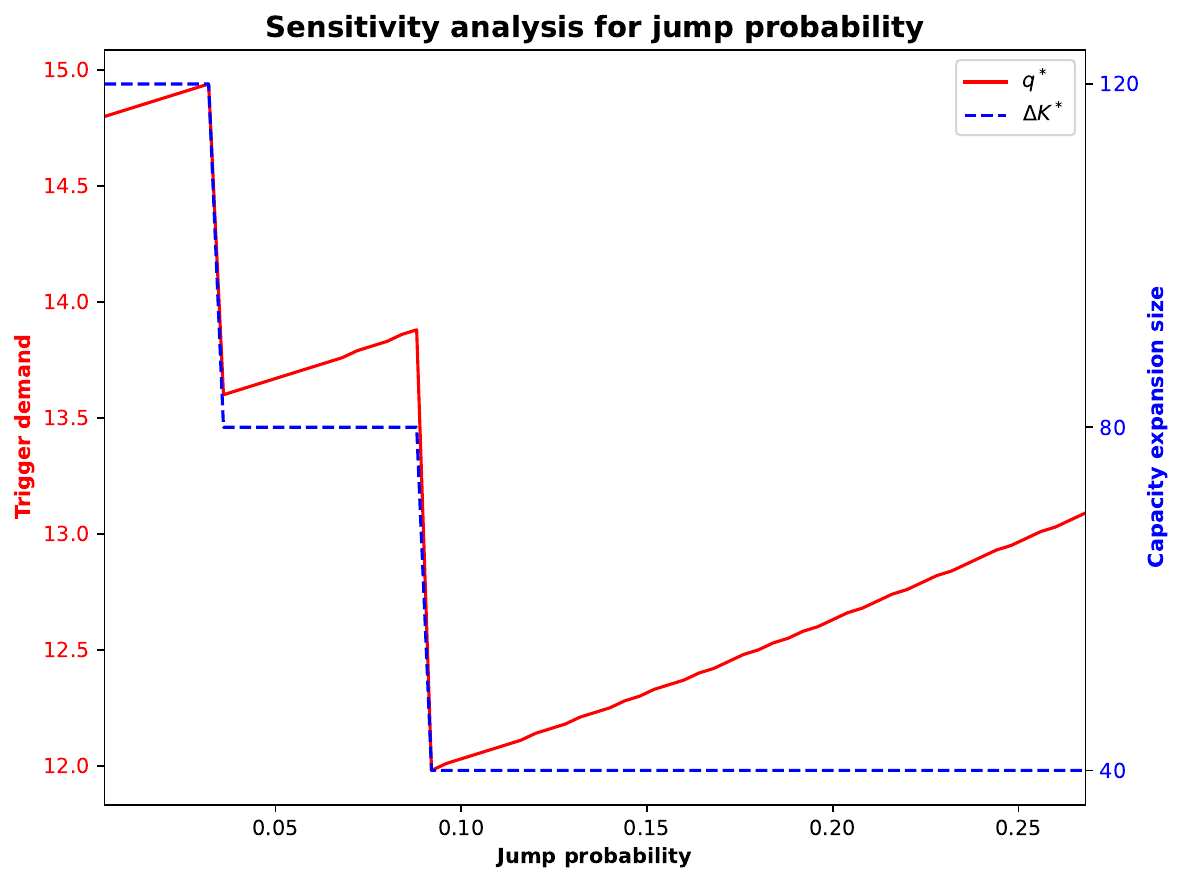}}}
    \caption{Sensitivity analysis for random jumps.}
    \label{fig:sensitivity_analysis}
\end{figure}


\subsection{Comparative study}

In this subsection, we compare optimal capacity expansion decisions across different scenarios. We examine two airports, one with an initial capacity of 1 runway ($K_0=40$) and another with 2 runways ($K_0=80$). Each is evaluated under varying demand growth rates: slow ($\eta = 0.01$), moderate ($\eta = 0.015$), and fast ($\eta = 0.02$). Additionally, we assess the impact of different levels of uncertainty: low uncertainty ($\sigma = 0.02, \lambda = 0.02, Z_i = -10\%$), medium uncertainty($\sigma = 0.04, \lambda = 0.04, Z_i = -15\%$), and high uncertainty ($\sigma = 0.06, \lambda = 0.06, Z_i = -20\%$). Table \ref{tab:comp_study} summarizes the optimal decisions for each scenario.

\begin{table}[ht]
\caption{Optimal decisions for the comparative study.}
\centering
{\begin{tabular}{cccccccc}
\toprule
\multirow{3}{*}{$K_0$}                                                     & \multirow{3}{*}{$\eta$} & \multicolumn{6}{c}{Uncertainty level}                                                          \\ \cmidrule{3-8} 
                                                                          &                      & \multicolumn{2}{c}{Low}         & \multicolumn{2}{c}{Med}         & \multicolumn{2}{c}{High} \\ \cmidrule{3-8} 
                                                                          &                      & $q^*$ & \multicolumn{1}{c}{$\Delta K^*$} & $q^*$ & \multicolumn{1}{c}{$\Delta K^*$} & $q^*$        & $\Delta K^*$       \\ \midrule
\multirow{3}{*}{\begin{tabular}[c]{@{}c@{}}40\\ (1 runway)\end{tabular}}  & 0.01                 & 11.3  & \multicolumn{1}{c}{40}       &  12.0 & \multicolumn{1}{c}{40}        &     12.9     &      40         \\
                                                                          & 0.015                 & 11.2  & \multicolumn{1}{c}{40}        &   11.8& \multicolumn{1}{c}{40}        &     12.7     &     40          \\
                                                                          & 0.02                 & 13.0  & \multicolumn{1}{c}{80}        &  13.5 & \multicolumn{1}{c}{80}        &      14.4    &      80         \\ \midrule
\multirow{3}{*}{\begin{tabular}[c]{@{}c@{}}80\\ (2 runways)\end{tabular}} & 0.01                 &  20.1 & \multicolumn{1}{c}{40}        & 21.4  & \multicolumn{1}{c}{40}        &    22.9      &    40           \\
                                                                          & 0.015                 & 22.4  & \multicolumn{1}{c}{80}        & 23.6  & \multicolumn{1}{c}{80}        &      25.3    &        80       \\
                                                                          & 0.02                 & 25.9  & \multicolumn{1}{c}{160}        & 25.4 & \multicolumn{1}{c}{120}        &   27.0       &      120         \\ \bottomrule
\end{tabular}}
\label{tab:comp_study}
\end{table}

From the table, we observe that as uncertainty increases from left to right, the trigger demand $q^*$ increases if the expansion size $\Delta K$ keeps unchanged; while $q^*$ decreases if $\Delta K$ decreases as well (for example, in the last row, as the uncertainty level increases from low to medium, the expansion size decreases from 160 operations per hour to 120 operations per hour, while the trigger demand decreases from 25.9 operations per hour to 25.4 operations per hour). This reflects the requirement for a higher demand level to prove the profitability of a capacity expansion decision under greater uncertainty. Meanwhile, a higher uncertainty level generally calls for a more conservative approach, resulting in a smaller expansion size.  Additionally, when moving from top to bottom for a given initial capacity $K_0$, the relationship between $q^*$ and $\Delta K$ is different. An increased demand growth rate $\eta$ generally leads to a larger expansion size and an increase in trigger demand; however, if the expansion size remains unchanged, the trigger demand tends to decrease slightly (for example, given $K_0 = 40$ and low uncertainty level, as the demand growth rate $\eta$ increases from 0.01 to 0.015, the expansion size remains unchanged at 40 operations per hour, while the trigger demand decreases from 11.3 operations per hour to 11.2 operations per hour). Finally, all else being equal, a larger initial capacity $K_0$ always implies a higher trigger demand and a larger expansion size. These findings offer valuable insights for airport authorities in planning runway capacity expansion decisions.

\section{Conclusions and future studies}
\label{sec: Conclusions and Future Studies}
In conclusion, this study aims to guide investment decisions for airport runway capacity expansion. Specifically, we answer the questions of when and by how much to expand runway capacity for public airports under demand uncertainty. To avoid the delay cost being unrealistic large when demand approaches or even exceeds capacity, a modified delay function is proposed. Two models are developed: a deterministic dynamic model and a stochastic dynamic model. In the stochastic model, we incorporate two levels of uncertainty: first, random fluctuations; and second, random jumps, caused by crisis events. This research fills a significant gap in the existing literature by quantitatively evaluating how different sources of uncertainty impact airport capacity expansion decisions, an area that is often overlooked. Additionally, our findings provide valuable insights for optimized management strategies. The main findings are summarized as follows:
\begin{enumerate}
    \item As the initial capacity increases, both the trigger demand and the expansion size increase. A similar trend is observed when the volatility rate increases. At points where the expansion size experiences a jump, the trigger demand also shows a spike. While this pattern holds with an increased demand growth rate when both the trigger demand and the expansion size increase, if the expansion size remains unchanged, the trigger demand decreases slightly.
    \item Random jumps have a distinct impact on capacity expansion decisions. A larger jump size or higher jump probability increases the trigger demand if the expansion size remains unchanged. However, as the jump size or jump probability continues to increase, the expansion size eventually drops, leading to a corresponding drop in the trigger demand.
    \item Runway capacity expansion decisions vary across airports with different initial capacities, demand growth rates, and uncertainty levels. Generally, a larger initial capacity implies both a higher trigger demand and a larger expansion size. A higher level of uncertainty always encourages a more conservative capacity expansion decision.
\end{enumerate}

Compared with \cite{balliauw2020expanding}, which investigated the optimal capacity expansion decision for private airports, our study, while yielding similar results in most cases, uncovers some new insights. First, two layers of uncertainty play distinct roles in the capacity expansion decision. The expansion size increases as the volatility rate increases, but decreases when the random jump size or probability increases. Regarding the trigger demand, it consistently increases with the volatility rate, but decreases when the expansion size drops as the random jump size or probability increases. Second, with the same expansion size, a higher demand growth rate leads to a lower trigger demand. However, when the expansion size jumps to the next level as the demand growth rate increases, an upward jump in the trigger demand is also observed.

For future studies, several potential directions can be explored. First, the runway system may be influenced by various factors, such as its physical layout and weather conditions. A more detailed investigation may be required to better reflect the cost structure of the airport runway system. Besides, alternative formulations of the delay function and system cost function may be considered to provide a more accurate representation of the airport runway cost structure. Second, while this study focuses solely on the runway system, it may be beneficial to consider other airport components, such as aprons, gates, and terminals, in conjunction. Additionally, assessing the profits generated by an airport alongside its capacity expansion decisions could provide insights aimed at maximizing profitability for private operators, further enhancing the results provided in \cite{balliauw2020expanding}. Third, capacity expansion decisions heavily rely on future demand projections. Although we adopt the widely used assumption that demand follows an exponential trend, as modeled by GBM, it is important to acknowledge that a carrying limit may exist for the maximum potential demand level. Therefore, the analyses presented in this paper apply primarily to small- or medium-sized airports where demand grows exponentially. This assumption may not hold when demand has already reached large levels. If that happens, utilizing the stochastic logistic process for demand growth modeling may offer a more accurate modeling approach, as suggested by \cite{li2024trb}. Other stochastic processes may also be considered for demand modeling. Fourth, this study only considers a one-time expansion decision; however, in many cases, such decisions are made sequentially. Future work could investigate the interrelationship of each expansion decision within the context of sequential decision-making, as illustrated by \cite{bensoussan2019sequential}. Lastly, this study assumes that jump occurrences follow a deterministic dynamic process. However, future jumps could be correlated, with their magnitudes influenced by potential demand. Incorporating such correlations into the analysis would represent a valuable methodological contribution.

\section*{Acknowledgments}

This work was presented at the 6th Bridging Transportation Researchers (BTR 6) online conference (\cite{li2024btr}). The authors thank the attendees for their constructive comments and suggestions.

\section*{Disclosure of interest}
No potential conflict of interest was reported by the author(s).

\bibliographystyle{elsarticle-harv}
\bibliography{ref}

\appendix

\section{Calculations of Eqs. (\ref{eq:deterministic_expected_cumulative_cost_difference}) and (\ref{eq:expected_cumulative_cost_difference})}
\subsection{Calculations of Eq. (\ref{eq:deterministic_expected_cumulative_cost_difference})}
\label{sec: cal_detail_1}
\renewcommand{\thetheorem}{A\arabic{theorem}} 
Calculation shows that
\begin{equation}
\resizebox{0.89\textwidth}{!}{%
$
\begin{aligned}
    F(q, \Delta K)
    &= TC_1(T) - TC_2(T, \Delta K)\\
    &= \int_{T}^{\infty}N_p\left(C_1(\widetilde{Q}_t) - C_2(\widetilde{Q}_t, \Delta K)\right) e^{-\rho (T - t)} dt  - (f + v \Delta K)\\
    &= \int_{0}^{\infty}N_p\left(C_1(\widetilde{Q}_{t+T}) - C_2(\widetilde{Q}_{t+T}, \Delta K)\right) e^{-\rho t} dt  - (f + v \Delta K)\\    
    &= \int_{0}^{\infty}\left[N_p\left(C_1(Q_t) - C_2(Q_t, \Delta K)\right)  - \rho(f + v \Delta K)\right]  e^{-\rho t} dt\\
    &= \int_{0}^{\infty}\left[\Delta C(Q_t, \Delta K)  e^{-\rho t} dt\right],
\end{aligned}
$%
}
\end{equation}
where $Q_t = \widetilde{Q}_{t+T}$. Here, $\widetilde{Q}$ is the demand growth process starting from an initial demand different from the demand at the investment time $T$, with $\widetilde{Q}_{T} = q$; while $Q$ is the process with an initial demand that equals the demand at the investment timing ($Q_0 = q$). The cost difference function $\Delta C(q, \Delta K)$ is defined in Eq. (\ref{eq:deterministic_cost_difference_rate}).

\subsection{Calculations of Eq. (\ref{eq:expected_cumulative_cost_difference})}
\label{sec: cal_detail_2}

\begin{lemma}
\label{lemma:strong_markov}
(The strong Markov property for Ito diffusion, \cite{oksendal2013stochastic}, Theorem 7.2.4) For a bounded Borel function $f$, if $\tau$ is a stopping time, $Q$ is an Ito process, then
\begin{equation}
    \mathbb{E}_{|Q(\tau)=q}\left[f(Q(t + \tau))\right] = \mathbb{E}_{|Q(0)=q}\left[f(Q(t))\right].
\end{equation}
\end{lemma}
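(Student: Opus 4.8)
The plan is to reduce the strong Markov property to the ordinary (time-homogeneous) Markov property of the process $Q$, which follows from pathwise uniqueness of the governing stochastic differential equation, and then to upgrade from deterministic times to the stopping time $\tau$ by a discrete approximation.

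First I would establish the ordinary Markov property: for any fixed $s\ge 0$ and bounded Borel $f$, $\mathbb{E}\!\left[f(Q(t+s))\mid\mathcal{F}_s\right]=\mathbb{E}_{|Q(0)=Q(s)}\!\left[f(Q(t))\right]$ almost surely, where $\mathcal{F}_s$ is the natural filtration. The key point is that, conditionally on $\mathcal{F}_s$, the shifted process $u\mapsto Q(s+u)$ solves the same equation (\ref{eq:sde_basic}) driven by the shifted Brownian motion $\widetilde w(u):=w(s+u)-w(s)$ and the shifted compound-Poisson term, started from the random value $Q(s)$; since these shifted noises are independent of $\mathcal{F}_s$ and equal in law to the original noise, and since (\ref{eq:sde_basic}) has a unique strong solution depending measurably on its initial value, the conditional law of $Q(s+t)$ given $\mathcal{F}_s$ depends on the past only through $Q(s)$ and coincides with the law of the process started afresh at $Q(s)$.

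Next I would approximate $\tau$ from above by the dyadic stopping times $\tau_n:=2^{-n}\lceil 2^n\tau\rceil$, which take values in the countable set $\{k2^{-n}:k\ge 0\}$ and decrease to $\tau$. On each event $\{\tau_n=k2^{-n}\}\in\mathcal{F}_{k2^{-n}}$ the ordinary Markov property applies with $s=k2^{-n}$; since these events partition the probability space, summing gives the strong Markov identity at the discrete time $\tau_n$, namely $\mathbb{E}\!\left[f(Q(t+\tau_n))\mid\mathcal{F}_{\tau_n}\right]=\mathbb{E}_{|Q(0)=Q(\tau_n)}\!\left[f(Q(t))\right]$. Finally I would let $n\to\infty$: right-continuity of the sample paths gives $Q(\tau_n)\to Q(\tau)$ and $Q(t+\tau_n)\to Q(t+\tau)$ a.s., and for bounded continuous $f$ the claim follows by dominated convergence together with continuity of $x\mapsto\mathbb{E}_{|Q(0)=x}[f(Q(t))]$; a functional monotone-class argument then extends it to all bounded Borel $f$. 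Conditioning on $\{Q(\tau)=q\}$ yields the stated equality.

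The hard part is the passage to the limit: one needs the Feller property — continuity of $x\mapsto\mathbb{E}_{|Q(0)=x}[f(Q(t))]$ — in order to pass the limit through for $f$ merely Borel, and this is precisely where the Lipschitz and linear-growth hypotheses on the coefficients (and integrability of the jump sizes $Z_i$) are used, via $L^2$-stability of the solution in its initial datum. Since this is a classical fact, in the body of the paper it suffices to cite \cite{oksendal2013stochastic}, Theorem 7.2.4, for the diffusion case; the extension to the jump-diffusion (\ref{eq:sde_basic}) is routine, the compound-Poisson increments being independent and stationary just like the Brownian increments.
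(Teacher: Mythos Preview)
Your sketch is a correct and standard route to the strong Markov property for time-homogeneous diffusions (and its jump-diffusion extension): establish the ordinary Markov property via pathwise uniqueness and the independent-increments structure of the driving noise, approximate $\tau$ by dyadic stopping times, and pass to the limit using right-continuity of paths and the Feller property. The one caveat is that you invoke the Feller property for the jump-diffusion (\ref{eq:sde_basic}); this does hold under Lipschitz coefficients and integrable jump sizes, but it is worth being explicit that these hypotheses are in force, since the lemma as stated says only ``$Q$ is an Ito process.''

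By contrast, the paper does not prove this lemma at all: it is stated purely as a citation of \cite{oksendal2013stochastic}, Theorem~7.2.4, and used as a black box in the derivation of Eq.~(\ref{eq:expected_cumulative_cost_difference}). So your proposal is not a different route to the paper's proof --- it is a proof where the paper gives none. For the purposes of this paper a citation is entirely appropriate, since the result is classical; your argument would be what one supplies if asked to justify the citation, and in that role it is sound.
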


Eq. (\ref{eq:expected_cumulative_cost_difference}) can be calculated as follows:
\begin{equation}
\resizebox{0.89\textwidth}{!}{%
$
\begin{aligned}
    F (q, \Delta K)
    &= \mathbb{E}_{|Q(\tau)=q}\left[TC_1(\tau) - TC_2(\tau, \Delta K)\right]\\
    &= \mathbb{E}_{|Q(\tau)=q}\left[\int_{\tau}^{\infty}N_p\left(C_1(Q(t)) - C_2(Q(t), \Delta K)\right) e^{-\rho (t-\tau)} dt  - (f + v \Delta K) \right]\\
    &= \mathbb{E}_{|Q(\tau)=q}\left[\int_{0}^{\infty}N_p\left(C_1(Q(t + \tau)) - C_2(Q(t + \tau), \Delta K)\right)  e^{-\rho t} dt   - (f + v \Delta K)\right]\\
    &= \mathbb{E}_{|Q(0)=q}\left[\int_{0}^{\infty}N_p\left(C_1(Q(t)) - C_2(Q(t), \Delta K)\right)  e^{-\rho t} dt   - (f + v \Delta K)\right]\\
    &= \mathbb{E}_{|Q(0)=q}\int_{0}^{\infty}\left[N_p\left(C_1(Q(t)) - C_2(Q(t), \Delta K)\right) - \rho(f + v \Delta K)\right]  e^{-\rho t} dt\\
    &= \mathbb{E}_{|Q(0)=q}\int_{0}^{\infty}\left[\Delta C(Q(t), \Delta K)  e^{-\rho t} dt\right],
\end{aligned}
$%
}
\end{equation}
where the cost difference function $\Delta C(q, \Delta K)$ is defined in Eq. (\ref{eq:deterministic_cost_difference_rate}). The fourth equation in Eq. (\ref{eq:expected_cumulative_cost_difference}) has applied the strong Markov property from Lemma \ref{lemma:strong_markov}.

\section{Proof of Theorem \ref{thm: F_deterministic}}
\label{sec: proof_F_deterministic}

Given $Q_0 = q$, we define $t_1$ and $t_2$ such that $Q_{t_1}=qe^{\eta t_1} = K_0, Q_{t_2}=qe^{\eta t_2} = K_0 + \Delta K$. By solving these equations, we find $t_1 = \frac{1}{\eta}\ln{\frac{K_0}{q}}, t_2 = \frac{1}{\eta}\ln{\frac{K_0 + \Delta K}{q}}$. Substituting Eq. (\ref{eq:deterministic_cost_difference_rate}) into Eq. (\ref{eq:deterministic_expected_cumulative_cost_difference}) results in the cumulative cost difference function $F(q, \Delta K)$ as follows:
\begin{equation}
\label{eq:deterministic_F_function}
\resizebox{0.89\textwidth}{!}{%
$
\begin{aligned}
    ~&~F(q, \Delta K)\\
    =&~ \int_{0}^{t_1} \left( \alpha_1 Q_t^{\beta + 1}+ \alpha_2 \right) e^{-\rho t} dt +
    \int_{t_1}^{t_2} \left(\alpha_3 Q_t^{\beta + 1} + \alpha_4 Q_t + \alpha_5 \right) e^{-\rho t} dt +
    \int_{t_2}^{\infty} \alpha_6 e^{-\rho t} dt\\
    =&~ \int_{0}^{t_1} \left( \alpha_1 q^{\beta + 1}e^{\eta(\beta + 1)t}+ \alpha_2 \right) e^{-\rho t} dt +
    \int_{t_1}^{t_2} \left(\alpha_3 q^{\beta + 1}e^{\eta(\beta + 1)t} + \alpha_4 qe^{\eta t}  + \alpha_5 \right) e^{-\rho t} dt +
    \int_{t_2}^{\infty} \alpha_6 e^{-\rho t} dt\\
    =&~ \int_{0}^{t_1} \left( \alpha_1 q^{\beta + 1}e^{(\eta(\beta + 1) - \rho)t}+ \alpha_2e^{-\rho t} \right) dt + \int_{t_1}^{t_2} \left(\alpha_3 q^{\beta + 1}e^{(\eta(\beta + 1) - \rho)t} + \alpha_4 qe^{(\eta-\rho) t} + \alpha_5 e^{-\rho t}\right)  dt + \int_{t_2}^{\infty} \alpha_6 e^{-\rho t} dt\\
    =&~ \left(\frac{\alpha_1 q^{\beta + 1}(e^{(\eta(\beta + 1) - \rho)t_1} - 1)}{\eta(\beta + 1) - \rho} - \frac{\alpha_2 (e^{-\rho t_1} - 1)}{\rho}\right) \\
    &\qquad +
    \left(\frac{\alpha_3 q^{\beta + 1}(e^{(\eta(\beta + 1) - \rho)t_2} - e^{(\eta(\beta + 1) - \rho)t_1})}{\eta(\beta + 1) - \rho} + \frac{\alpha_4 q(e^{(\eta - \rho)t_2} - e^{(\eta - \rho)t_1})}{\eta- \rho} - \frac{\alpha_5 (e^{-\rho t_2} - e^{-\rho t_1})}{\rho}\right) \\
    &\qquad +
    \frac{\alpha_6 e^{-\rho t_2}}{\rho}\\
    =&~ \left(\frac{\alpha_1 q^{\beta + 1}\left(\left(\frac{K_0}{q}\right)^{\frac{\eta(\beta + 1) - \rho}{\eta}} - 1\right)}{\eta(\beta + 1) - \rho} - \frac{\alpha_2 \left(\left(\frac{K_0}{q}\right)^{-\frac{\rho}{\eta}} - 1\right)}{\rho}\right) \\
    &\qquad +
    \left(\frac{\alpha_3 q^{\beta + 1}\left(\left(\frac{K_0 +\Delta K}{q}\right)^{\frac{\eta(\beta + 1) - \rho}{\eta}} - \left(\frac{K_0}{q}\right)^{\frac{\eta(\beta + 1) - \rho}{\eta}}\right)}{\eta(\beta + 1) - \rho} + \frac{\alpha_4 q\left(\left(\frac{K_0 +\Delta K}{q}\right)^{\frac{\eta - \rho}{\eta}} - \left(\frac{K_0}{q}\right)^{\frac{\eta - \rho}{\eta}}\right)}{\eta- \rho} - \frac{\alpha_5 \left(\left(\frac{K_0 +\Delta K}{q}\right)^{-\frac{\rho}{\eta}} - \left(\frac{K_0}{q}\right)^{-\frac{\rho}{\eta}}\right)}{\rho}\right)  \\
    &\qquad +
    \frac{\alpha_6 \left(\frac{K_0 + \Delta K}{q}\right)^{-\frac{\rho}{\eta}}}{\rho}\\
    =&~ \left(\frac{\alpha_3  \left(\frac{K_0 +\Delta K}{q}\right)^{(\beta+1)-\frac{\rho}{\eta}} + (\alpha_1-\alpha_3) \left(\frac{K_0 }{q}\right)^{(\beta+1)-\frac{\rho}{\eta}} - \alpha_1}{\eta(\beta + 1) - \rho}\right) q^{\beta + 1}  \\
    &\qquad + \frac{\alpha_4 \left(\left(\frac{K_0 +\Delta K}{q}\right)^{1-\frac{\rho}{\eta}} - \left(\frac{K_0}{q}\right)^{1-\frac{\rho}{\eta}}\right)}{\eta- \rho}q\\
    &\qquad + \frac{ (\alpha_6 - \alpha_5)\left(\frac{K_0 +\Delta K}{q}\right)^{-\frac{\rho}{\eta}}    +  (\alpha_5 - \alpha_2)\left(\frac{K_0}{q}\right)^{-\frac{\rho}{\eta}} + \alpha_2 }{\rho}\\
    =&~ \left( \frac{\alpha_3\left(K_0+\Delta K\right)^{(\beta+1)-\frac{\rho}{\eta}} + (\alpha_1-\alpha_3) K_0^{(\beta+1)-\frac{\rho}{\eta}} }{\eta(\beta+1)-\rho}   \right) q^{\frac{\rho}{\eta}} -\frac{\alpha_1}{\eta(\beta+1)-\rho} q^{\beta+1}  \\
    &\qquad + \frac{\alpha_4 \left(\left(K_0 +\Delta K\right)^{1-\frac{\rho}{\eta}} - K_0^{1-\frac{\rho}{\eta}}\right)}{\eta- \rho}q^{\frac{\rho}{\eta}}\\
    &\qquad + \frac{ (\alpha_6 - \alpha_5)\left(K_0 +\Delta K\right)^{-\frac{\rho}{\eta}}    +  (\alpha_5 - \alpha_2) K_0^{-\frac{\rho}{\eta}}}{\rho} q^{\frac{\rho}{\eta}} + \frac{\alpha_2}{\rho}\\
    =&~ A_1q^{\beta + 1} + A_2q^{\frac{\rho}{\eta}}+A_3,
\end{aligned}$%
}
\end{equation}
where $A_1, A_2, A_3$ are given as follows:
\begin{equation}
    \resizebox{0.89\textwidth}{!}{%
    $
    \begin{aligned}
        A_1 &= - \frac{\alpha_1}{\eta(\beta+1)-\rho} = - \frac{N_pA\alpha}{\eta(\beta+1)-\rho}\left({K_0^{-\beta}}-\left(K_0+\Delta K\right)^{-\beta}\right),\\
        A_2 &= \frac{\alpha_3  \left(K_0 +\Delta K\right)^{(\beta+1)-\frac{\rho}{\eta}} + (\alpha_1-\alpha_3) K_0^{(\beta+1)-\frac{\rho}{\eta}}}{\eta(\beta + 1) - \rho} + \frac{\alpha_4 \left(\left(K_0 +\Delta K\right)^{1-\frac{\rho}{\eta}} - K_0^{1-\frac{\rho}{\eta}}\right)}{\eta- \rho} + \frac{ (\alpha_6 - \alpha_5)\left(K_0 +\Delta K\right)^{-\frac{\rho}{\eta}}    +  (\alpha_5 - \alpha_2)K_0^{-\frac{\rho}{\eta}}}{\rho}\\
        &= - \frac{N_pA\alpha\left(\left(K_0+\Delta K\right)^{1-\frac{\rho}{\eta}} - K_0^{1-\frac{\rho}{\eta}}\right)}{\eta(\beta+1)-\rho} 
        +\frac{N_p A \alpha(\beta + 1)\left( \left(K_0+\Delta K\right)^{1-\frac{\rho}{\eta}} - K_0^{1-\frac{\rho}{\eta}} \right)}{\eta-\rho}
        + \frac{N_p A \alpha\beta\left(\left(K_0+\Delta K\right)^{1-\frac{\rho}{\eta}} - K_0^{1-\frac{\rho}{\eta}}\right)}{\rho}\\
        &= N_pA\alpha\left(\frac{\beta + 1}{\eta-\rho} + \frac{\beta}{\rho} -\frac{1}{\eta(\beta+1)-\rho} \right)\left(\left(K_0+\Delta K\right)^{1-\frac{\rho}{\eta}} - K_0^{1-\frac{\rho}{\eta}}\right),\\
        A_3 &= \frac{\alpha_2}{\rho} = -\frac{(N_pc_h+\rho v)\Delta K + \rho f}{\rho}.
    \end{aligned}$%
    }
\end{equation}

\section{Proof of Theorem \ref{thm:F_function}}
\label{sec: proof_F_stochastic}

\setcounter{theorem}{0} 
\renewcommand{\thetheorem}{C\arabic{theorem}} 

\begin{lemma}
\label{lemma:ito}
(Ito's lemma, \cite{dixit1994investment}, page 86, Eq. (42)) For the general stochastic process
\begin{equation}
\label{eq:general_stochastic_process_with_jump}
    dQ = a(Q,t) dt + b(Q,t) dw + c(Q,t) d\sum_{i=1}^{N_t} Z_i,
\end{equation}
with $w, Z_i, N_t$ defined the same as Eq. (\ref{eq:sde_basic}), if $F$ is a differentiable function of $Q$ and $t$,
\begin{equation}
\label{eq:ito_with_jump_general}
\resizebox{0.89\textwidth}{!}{%
    $
    \mathbb{E}[dF] = \left[\frac{\partial F}{\partial t} + a(Q,t)\frac{\partial F}{\partial Q} + \frac{1}{2} b^2(Q,t) \frac{\partial^2 F}{\partial Q^2}\right]dt + \mathbb{E}_Z\{\lambda[F(Q + c(Q,t)Z, t) - F(Q,t)]\}.
    $%
}
\end{equation}
\end{lemma}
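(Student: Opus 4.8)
The plan is to derive \eqref{eq:ito_with_jump_general} from first principles by splitting the infinitesimal increment $dF = F(Q+dQ, t+dt) - F(Q,t)$ into a purely continuous (diffusive) contribution and a pure-jump contribution, exploiting the independence of the Brownian increment $dw$, the jump indicator, and the jump magnitude $Z$ that is built into \eqref{eq:sde_basic}. The organizing observation is that over $[t, t+dt]$ the Poisson process $N_t$ either stays put, with probability $1-\lambda\,dt$, or records exactly one jump, with probability $\lambda\,dt$; the probability of two or more jumps is $o(dt)$ and is discarded. Conditioning on these two mutually exclusive events and recombining through the law of total expectation will reproduce the two terms on the right-hand side.

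First I would treat the \emph{no-jump} event. Conditioned on no jump, $dQ = a(Q,t)\,dt + b(Q,t)\,dw$, and a second-order Taylor expansion of $F$ about $(Q,t)$ gives
\begin{equation*}
dF = \frac{\partial F}{\partial t}\,dt + \frac{\partial F}{\partial Q}\,dQ + \frac{1}{2}\frac{\partial^2 F}{\partial Q^2}\,(dQ)^2 + o(dt).
\end{equation*}
Applying the formal multiplication rules $(dw)^2 = dt$, $dw\,dt = 0$, $(dt)^2 = 0$ collapses $(dQ)^2$ to $b^2(Q,t)\,dt$, and taking expectations while using $\mathbb{E}[dw]=0$ eliminates the martingale contribution $b\,\partial_Q F\,dw$. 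What remains is precisely the bracketed drift-diffusion term $[\partial_t F + a\,\partial_Q F + \tfrac{1}{2} b^2\,\partial_{QQ}F]\,dt$.

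Next I would treat the \emph{jump} event. When a single jump occurs, $Q$ moves instantaneously from $Q$ to $Q + c(Q,t)Z$, so $F$ increments by $F(Q + c(Q,t)Z, t) - F(Q,t)$, while the continuous drift and diffusion accrued during the same interval are lower order and are absorbed into the $o(dt)$ remainder. Weighting this increment by the single-jump probability $\lambda\,dt$ and averaging over the law of $Z$, which is independent of $w$, produces the jump channel's contribution $\lambda\,\mathbb{E}_Z[F(Q + c(Q,t)Z, t) - F(Q,t)]\,dt$, i.e. the rate term on the right-hand side of \eqref{eq:ito_with_jump_general}. Combining the no-jump and jump contributions additively, and discarding every cross term (products of $dw$, $dt$, and the jump indicator, each of which is $o(dt)$), yields the claimed expression for $\mathbb{E}[dF]$.

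The main obstacle is the careful order-of-$dt$ bookkeeping that legitimizes this additive decoupling: I must confirm that the quadratic variation $(dw)^2$ enters at order $dt$, that the jump enters at the Poisson rate $\lambda\,dt$, and, crucially, that the simultaneous occurrence of a diffusion fluctuation and a jump within a single interval contributes only at order $(dt)^2$, so the diffusion and jump channels genuinely separate. A fully rigorous justification would replace this heuristic increment calculus with the Itô formula for jump-diffusion semimartingales written against the compensated Poisson random measure, together with the independence and finite-rate assumptions on $(w, N_t, Z)$ from \eqref{eq:sde_basic}; at the level adopted here, the first-principles derivation, consistent with Dixit and Pindyck, suffices.
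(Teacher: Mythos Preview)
Your derivation is the standard heuristic argument for It\^o's lemma with jumps and is correct at the level of rigor the paper adopts. Note, however, that the paper does not actually prove this lemma: it is stated with a direct citation to \cite{dixit1994investment} (page 86, Eq.~(42)) and then used as a black box to obtain the ODE \eqref{eq:F_ode_by_dp_and_ito}. So there is no ``paper's own proof'' to compare against; your conditioning-on-jump/no-jump argument is essentially the derivation Dixit and Pindyck themselves give, and it is entirely appropriate here.
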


Since $F(Q(t), \Delta K)$ satisfies the following dynamic programming formula:
\begin{equation}
\label{eq:cost_difference_dp}
\resizebox{0.89\textwidth}{!}{%
    $
    F(Q(t), \Delta K) = \Delta C(Q(t), \Delta K) dt + e^{-\rho dt}\mathbb{E}[F(Q(t), \Delta K) + d F(Q(t), \Delta K)],
$%
}
\end{equation}
applying Lemma \ref{lemma:ito} for $F(Q(t))$ with $Q(t)$ defined by Eq. (\ref{eq:sde_basic}), when setting $\Delta K$ fixed, gives
\begin{equation}
\label{eq:ito_sde}
\resizebox{0.89\textwidth}{!}{%
    $
    \mathbb{E}[d F(Q(t))] = \left[\eta Q(t)\frac{d F(Q(t))}{d Q} + \frac{1}{2} \sigma^2Q(t)^2 \frac{d^2 F(Q(t))}{d Q^2}\right]dt + \mathbb{E}_Z\left\{\lambda \left[F(Q(t)(1+Z)) - F(Q(t))\right]\right\} dt,
    $%
    }
\end{equation}
where for simplicity we omit $\Delta K$ in function $F(Q(t), \Delta K)$. If $Q(t) = q$, substituting Eq. (\ref{eq:ito_sde}) into Eq. (\ref{eq:cost_difference_dp}) yields the following ordinary differential equation:
\begin{equation}
\label{eq:F_ode_by_dp_and_ito}
\resizebox{0.89\textwidth}{!}{%
    $
    \frac{1}{2} \sigma^2q^2 F''(q) + \eta q F'(q) + \lambda \mathbb{E}_Z \left[F(q(1+Z)) - F(q)\right] -\rho F(q) + \Delta C(q, \Delta K) = 0,
    $%
}
\end{equation}
which has to be satisfied by the solution for values of $q$ inside three regions: region $R_1$: $q < K_0$; region $R_2$: $K_0 \leq q < K_0 + \Delta K$; and region $R_3$: $q \geq K_0 + \Delta K$.

The solution of Eq. (\ref{eq:F_ode_by_dp_and_ito}) is as follows (\cite{dangl1999investment}):
\begin{equation}
\label{eq:F_solution_by_cauchy_euler}
    F(q)_{q \in R_i} = F_i(q) = A_{i, 1} q^{b_1} + A_{i, 2} q^{b_2} + \overline{F_i}(q),\quad i = 1,2,3.
\end{equation}
$\overline{F_i}$ is a particular solution of the differential equation (\ref{eq:F_ode_by_dp_and_ito}) with $\Delta C = \Delta C_i$ (defined in Eq. (\ref{eq:deterministic_cost_difference_rate})). The solution of $\overline{F_i}$ is given by Eq. (\ref{eq:ode_particular_sol}).

The condition for $b$ is:
\begin{equation}
\label{eq:solving_b}
    \varphi(b) := \frac{\sigma^2}{2}b(b-1) + \eta b + \lambda \mathbb{E}_Z (1+Z)^{b} -(\rho+\lambda) = 0.
\end{equation}
Note that by definition, $Z$ is the jump size (percentage), which is a random variable taking values less than zero but greater than $-1$. Consequently, we conclude that $0 \leq 1+ Z \leq 1$ holds almost surely. Therefore, $\mathbb{E}_Z (1+Z)^{b}$ is bounded, leading us to observe $\lim_{b\rightarrow-\infty}\varphi(b) = +\infty$ and $\lim_{b\rightarrow+\infty}\varphi(b) = +\infty$. Considering that $\varphi(b) \leq \frac{\sigma^2}{2}b(b-1) + \eta b -\rho$, we find $\varphi(0) \leq -\rho < 0$ and $\varphi(1) \leq \eta -\rho < 0$ when the discount rate $\rho$ is larger than the demand growth rate $\eta$. Therefore, Eq. (\ref{eq:solving_b}) has two solutions, $b_1$ and $b_2$, with $b_1 > 1$ and $b_2 < 0$.
Following the approach in \cite{dangl1999investment}, since demand $Q(t)$ is volatile, it is able to cross the boundaries freely. Consequently, the solution must satisfy the following boundary conditions:
\begin{align}
    \label{eq:value-matching1}F(K_0-) &= F(K_0+),\\
    \label{eq:value-matching2}F((K_0 + \Delta K)-) &= F((K_0 + \Delta K)+),\\
    \label{eq:smooth-pasting1}F_q'(K_0-) &= F_q'(K_0+),\\
    \label{eq:smooth-pasting2}F_q'((K_0 + \Delta K)-) &= F_q'((K_0 + \Delta K)+).
\end{align}

Additionally, when considering large values of $q$, particularly when $q\in R_3$ and grows large, the likelihood of $q$ crossing the boundary into $R_2$ diminishes. Therefore,
\begin{equation}
    \lim_{q\rightarrow\infty} F(q) = \mathbb{E}\left[\int_0^\infty \alpha_6 e^{-\rho t}\right] = \frac{\alpha_6}{\rho} = \overline{F_3}.
\end{equation}
Since $b_1 > 1, b_2 < 0$, we conclude that
\begin{equation}
    \label{eq:boundary_infty}
    A_{3,1} = 0.
\end{equation}

Finally, we consider the characteristics of the stochastic process modeled as GBM combined with a jump process. For GBM, 0 is an absorbing point (i.e., if $Q(0)=0$, then $Q(t) \equiv 0$ for any time $t$). When incorporating jumps, 0 remains an absorbing point because a jump represents a proportionate decrease in demand. Therefore, 
\begin{equation}
    F(0) = \mathbb{E}\left[\int_0^\infty(0 + \alpha_2)e^{-\rho t}\right] = \frac{\alpha_2}{\rho}.
\end{equation}
Given that $b_1 > 1, b_2 < 0$, we conclude that
\begin{equation}
    \label{eq:boundary_0}
    A_{1,2} = 0.
\end{equation}
Solving Eqs. (\ref{eq:value-matching1}), (\ref{eq:value-matching2}), (\ref{eq:smooth-pasting1}), (\ref{eq:smooth-pasting2}), (\ref{eq:boundary_infty}), (\ref{eq:boundary_0}) yields the values of $A_{i,j}, i\in\{1,2,3\}, j\in\{1,2\}$, as shown in Eq. (\ref{eq:A_ij}).
    \begin{equation}
    \label{eq:A_ij}
    \resizebox{0.88\textwidth}{!}{%
    $
    \begin{aligned}
        A_{1,1} &= A_{2,1} + A_{2,2} K_0^{b_2-b_1} + \frac{\overline{F_2}(K_0) - \overline{F_1}(K_0)}{K_0^{b_1}},
        \\
        A_{1,2} &= 0,\\
        A_{2,1} &= \frac{b_2(\overline{F_3}(K_0 + \Delta K) - \overline{F_2}(K_0 + \Delta K)) - (K_0 + \Delta K)(\overline{F_3}'(K_0 + \Delta K) - \overline{F_2}'(K_0 + \Delta K))}{(b_2-b_1)(K_0 + \Delta K)^{b_1}},\\
        A_{2,2} &= \frac{b_1(\overline{F_2}(K_0) - \overline{F_1}(K_0)) - K_0(\overline{F_2}'(K_0) - \overline{F_1}'(K_0))}{(b_2-b_1)K_0^{b_2}},\\
        A_{3,1} &= 0,\\
        A_{3,2} &= A_{2,1}(K_0+\Delta K)^{b_1-b_2} + A_{2,2} - \frac{\overline{F_3}(K_0 + \Delta K) - \overline{F_2}(K_0 + \Delta K)}{(K_0 + \Delta K)^{b_2}}.
    \end{aligned}
    $%
    }
    \end{equation}
    
    We therefore obtain the function $F(q, \Delta K)$.

\section{Proof of Theorem \ref{thm: solving_V} and Theorem \ref{thm:trigger}}
\label{sec: proof_trigger}

Analogous to the proof of Theorem \ref{thm:F_function}, we apply Ito's lemma to $V(Q(t))$:
\begin{equation}
\label{eq:ito_V}
\resizebox{0.89\textwidth}{!}{%
    $
    \mathbb{E}[d V(Q(t))] = \left[\eta Q(t)\frac{d V(Q(t))}{d Q} + \frac{1}{2} \sigma^2Q(t)^2 \frac{d^2 V(Q(t))}{d Q^2}\right]dt + \mathbb{E}_Z\left\{\lambda \left[V(Q(t)(1+Z)) - V(Q(t))\right]\right\} dt.
    $%
    }
\end{equation}
Substituting Eq. (\ref{eq:ito_V}) into Eq. (\ref{eq:option_pred}) results in the following ordinary differential equation for $V$:
\begin{equation}
\label{eq:ode_V}
    \frac{1}{2} \sigma^2q^2 V''(q) + \eta q V'(q) + \lambda \mathbb{E}_Z \left[V(q(1+Z)) - V(q)\right] -\rho V(q) = 0.
\end{equation}
The solution of Eq. (\ref{eq:ode_V}) is as follows:
\begin{equation}
\label{eq:V_solution_by_cauchy_euler}
    V(q) = \overline{A_1} q^{b_1} + \overline{A_2} q^{b_2}.
\end{equation}
Since $V$ cannot approach infinity when $q$ is at the absorbing point 0, the coefficient of $q^{b_2}$, denoted as $\overline{A_2}$, must be zero. Therefore, we have:
\begin{equation}
\label{eq:v_boundary_a2}
    \overline{A_2} = 0.
\end{equation}
To solve for $\overline{A_1}$, we consider the no arbitrage condition: the transition from the waiting area to the investment area when $Q(t) = q^*$ must be value-matched. If we define 
\begin{equation}
    \overline{F}(q) = \max_{\Delta K} F(q, \Delta K),
\end{equation}
then the value-matching condition is given by
\begin{equation}
\label{eq:v_boundary_a1_raw}
    \overline{F}(q^*) = V(q^*).
\end{equation}
The conditions in Eqs. (\ref{eq:v_boundary_a2}) and (\ref{eq:v_boundary_a1_raw}) provide the solution to Eq. (\ref{eq:V_solution_by_cauchy_euler}), if the optimal solution $q^*$ is known. To determine $q^*$, 
we should recognize that at the transition point, the option value must be maximized. Therefore, $q^*$ satisfies the following condition:
\begin{equation}
\label{eq:q*_condition_raw}
    q^*= \argmax_{q}\left\{\frac{\overline{F}(q)}{q^{b_1}}\right\}.
\end{equation}
Consequently, we have:
\begin{equation}
\label{eq:A1_condition}
    \overline{A_1} = \max_{q}\left\{\frac{\overline{F}(q)}{q^{b_1}}\right\}.
\end{equation}

It can be proved that maximizing the option value is equivalent to the smooth-pasting condition below (see Appendix \ref{sec: value_matching}). 
\begin{equation}
\label{eq:value_matching_condition}
\overline{F}'(q^*) = V'(q^*).
\end{equation}

Solving Eqs. (\ref{eq:v_boundary_a2}), (\ref{eq:v_boundary_a1_raw}) and (\ref{eq:A1_condition}) yields the option function in Eq. (\ref{eq:V_solution_by_cauchy_euler}) along with the trigger value $q^*$. Subsequently, the optimal expansion size is given by Eq. (\ref{eq:expansion_size_stoc}).

\section{Equivalence of Eqs. (\ref{eq:A1_condition}) and (\ref{eq:value_matching_condition})}
\label{sec: value_matching}

Let us define
\begin{equation}
 \phi(q) = \frac{\overline{F}(q)}{q^{b_1}}.
\end{equation}
To maximize $\phi(q)$, we apply the first order condition:
\begin{equation}
\label{eq:foc}
    q^* \overline{F}'(q^*) - b_1 \overline{F}(q^*) = 0.
\end{equation}
Considering
\begin{equation}
    V(q) = A_1 q^{b_1}, \quad V'(q) = A_1b_1q^{b_1-1},
\end{equation}
we can express
\begin{equation}
\label{eq:relation_V_V_prime}
    b_1 V(q^*) = q V'(q^*).
\end{equation}
Combining Eq. (\ref{eq:relation_V_V_prime}) with
\begin{equation}
    \overline{F}(q^*) = V(q^*),
\end{equation}
we find
\begin{equation}
    F'(q^*) = \frac{b_1 \overline{F}(q^*)}{q^*} = \frac{b_1V(q^*)}{q^*} = V'(q^*).
\end{equation}

The derivations above are reversible, thus proving the equivalence of Eqs. (\ref{eq:A1_condition}) and (\ref{eq:value_matching_condition}).

\end{document}